\providecommand{\U}[1]{\protect\rule{.1in}{.1in}}
\newtheorem{theorem}{Theorem}
\newtheorem{definition}{Definition}
\newtheorem{example}{Example}
\newtheorem{lemma}{Lemma}
\newtheorem{proposition}{Proposition}
\numberwithin{equation}{section}
\definecolor{MyDarkBlue}{rgb}{0,0.08,0.45}
\definecolor{cites}{HTML}{324b13}
\definecolor{links}{HTML}{1a663b}
\definecolor{MyLightMagenta}{cmyk}{0.1,0.8,0,0.1}
\begin{document}

\title{A dynamic auction for multilateral collaboration}
\author{Chao Huang\thanks{Institute for Social and Economic Research, Nanjing Audit University. Email: huangchao916@163.com.}}
\date{}
\maketitle

\begin{abstract}
We study the problem of multilateral collaboration among agents with transferable utilities. Any group of agents can sign a contract consisting of a primitive contract and monetary transfers among the signatories. We propose a dynamic auction that finds a stable outcome when primitive contracts are gross complements for all participants.
\end{abstract}

\textit{Keywords}: Multi-item auction; matching; gross complements; stability; competitive equilibrium; dynamic auction.

\textit{JEL classification}: C78, D44, D47

\section{Introduction}\label{Sec_intro}

Equilibrium and stability are important solution concepts in various markets.
In the problem of allocating objects among agents, economists often design auctions to produce equilibria or stable outcomes, which are typically both efficient and immune to deviations. When a competitive equilibrium exists in a decentralized market, \cite{S60} demonstrated that the market does not necessarily converge to an equilibrium.\footnote{In his examples, \cite{S60} assumed the rate of change of the price of each commodity is proportional to the excess market demand for that commodity. Starting from any set of prices other than equilibrium, the prices in the examples do not converge to an equilibrium.} Consequently, designing an auction to find an equilibrium has the advantage of producing a desirable outcome in a centralized manner, avoiding the potential convergence issues that can arise in decentralized markets.

This paper designs an auction for multilateral collaboration, that is, allocating contracts rather than objects among agents. Our work closely relates to and builds upon the study of \cite{RY20} (henceforth, RY) on multilateral cooperation with transferable utilities. In their framework, any group of agents can collaborate by signing contracts. Each contract comprises a primitive contract and monetary transfers among the signing agents. RY established that a stable outcome exists when primitive contracts are gross complements for all agents. They defined gross complementarity as the monotonicity of agents' demand correspondences in the strong set order in an environment with externalities. In the absence of externalities, their definition is equivalent to supermodularity of agents' valuations and, as shown by \cite{Y23}, is the polar opposite of the gross substitutes condition of \cite{KC82}.\footnote{See Definition \ref{def_GC} and Lemma \ref{lma_GC}.} Gross complementarity captures the key feature of complementarities in real-life multilateral cooperations. RY also investigated a parallel setting with nontransferable utilities and complementary contracts. Notably, a stable outcome in this setting can be found by a novel one-sided Deferred Acceptance algorithm; however, there is no counterpart of this algorithm in the setting with transferable utilities.

In this paper, we devise a dynamic auction for identifying stable outcomes in multilateral cooperation with transferable utilities and gross-complement primitive contracts. There are various multi-item auction formats to choose from.\footnote{See, for example, \cite{M04} and \cite{K10} for other forms of multi-item auctions.} Dynamic price-adjustment procedures offer a key advantage: they do not necessitate information on agents' valuations, requiring only that agents report their demands at given prices. We assume agents' valuations take integer values. This assumption aligns with real-world practices, where agents' values are measured in currency denominations with an indivisible smallest unit (such as pennies or cents). However, unlike RY, we do not consider externalities in agents' preferences, as an agent in any round of an auction does not know what contracts other agents wish to sign at current prices. Stable outcomes are efficient in the absence of externalities but may not be efficient when externalities are present.

No incentive-compatible mechanism exists that can find a stable outcome in the market we study.\footnote{See Section \ref{Sec_strategy}.} We assume agents report truthfully in our auction. Previous dynamic auctions have been primarily monotone, with prices either ascending or descending. However, in our problem, monetary transfers in a contract sum to zero. Hence, when disagreements arise over a contract among the participants, it is natural for the auctioneer to increase some participants' payments while decreasing some other participants' payments within the same contract. In Section \ref{Sec_illu}, we briefly explain how we adjust prices in the auction to resolve disagreements in agents' collaborations.

\subsection{Related literature}

This paper studies multilateral matching in which agents have transferable utilities and any set of agents can collaborate by signing contracts. \cite{HK15} examined a multilateral matching market with continuous contracts. Multilateral matching with nontransferable utilities have been studied by \cite{RY20}, \cite{BH21}, and \cite{H23}. However, the literature has primarily focused on two-sided matching, which matches agents on one side to agents on the other side (e.g., \citealp{KB57}, \citealp{SS71}, and \citealp{KSY20,KSY24}). This problem is closely related to exchange economies with indivisible goods (e.g., \citealp{BM97}, \citealp{DKM01}, and \citealp{BK19}) and trading networks (e.g., \citealp{HKNOW13,HKNOW21} and \citealp{CEV21}). 

Walrasian-tâtonnement processes are useful for allocating indivisible items.\footnote{Walrasian tâtonnement is a series of price adjustments from initial prices in which the price of each item increases or decreases according to whether the excess demand is positive or negative.} \cite{DGS86} designed an ascending-price auction for allocating multiple commodities among agents with unit demands.\footnote{However, the existence of equilibria does not rely on the assumption of transferable utilities when agents have unit demands; see, e.g., \cite{G84}, \cite{Q84}, and \cite{KY86}.} \cite{A04} proposed an  ascending-price auction in the setting with homogeneous goods and decreasing marginal utilities. Both auctions are efficient and strategy-proof since they produce Vickrey-Clarke-Groves (VCG) outcomes. 

Based on the existence of a competitive equilibrium under gross substitutability (\citealp{KC82}), \cite{GS00} proposed an ascending-price auction that finds an equilibrium for allocating gross-substitute commodities; \cite{A06} proposed a strategy-proof ascending-price auction that yields VCG outcomes for this problem. \cite{SY06,SY09} generalized gross substitutability in an economy with two groups of commodities by allowing cross-group complements, and provided a double-track auction for their economy. Ascending-price auctions have been successfully used for the sale of radio spectrum licenses in many countries; see, for example, \cite{M00,M04}, \cite{K04}, and \cite{MMW10}.

Recent studies provided positive results on the existence of equilibria in economies without the assumption of transferable utilities; see \cite{FJJT19}, \cite{BEJK23}, \cite{RY23}, and \cite{NV24}. When utilities are transferable, the single improvement property (\citealp{GS99}) of gross substitutability is crucial for auction design. When utilities are imperfectly transferable, \cite{S22} showed that the law of aggregate demand is needed to recover the single improvement property. 

\subsection{Mitigating disagreements}\label{Sec_illu}

In each round of our auction, each agent reports all her demand sets at the current prices, and the auctioneer mitigates disagreements among agents in a manner similar to the Walrasian-tâtonnement process. At given prices that specify the payments in all primitive contracts, a primitive contract is called \textbf{partially agreeable} if it is included in a demand set of some participants but not included in any demand set of some other participants. A primitive contract is called \textbf{strongly demanded} by some agent at given prices if it is included in every demand set of this agent. If the current prices admit a competitive equilibrium, the outcome should involve all primitive contracts that are strongly demanded by some agent but should not involve any partially agreeable primitive contract. Hence, if there is a partially agreeable primitive contract $w$ that is strongly demanded by Ana but not included in any demand set of Bob, there is no competitive equilibrium at the current prices, and the auctioneer should resolve the disagreement between Ana and Bob on the primitive contract $w$. This is an elementary type of disagreement in the market, and it is natural for the auctioneer to mitigate the disagreement by increasing Ana's payment in the primitive contract $w$ by one unit and decreasing Bob's payment in $w$ by one unit. If Ana's valuation takes integer values, at the new prices the primitive contract $w$ is still in some demand set of Ana,\footnote{Since Ana's valuation takes integer values, at the original prices each of Ana's optimal choices yields a utility at least one unit higher than the utility of her any suboptimal choice. Hence, at the new prices Ana's original optimal choices yield a utility no less than the utility of her any other choices.} and Bob may demand some set containing $w$ as the cost of this set decreases by one unit. This price adjustment reduces the value of a Lyapunov function, which was used in the market with divisible goods (see, e.g., \citealp{AH71} and \citealp{V81}) and introduced by \cite{A06} to the problem with indivisibilities. The minimum of the Lyapunov function corresponds to a competitive equilibrium.\footnote{See Section \ref{Sec_CE}.}

We show that, under gross complementarity, the general type of disagreements that the auctioneer should resolve arise from a structure called \textbf{complements chain} that connects a partially agreeable primitive contract and a strongly demanded one.\footnote{See Proposition \ref{prop_chain}.} For instance, suppose at given prices there is an Ana-$w$-Bob-$w'$-Carol complements chain, where the primitive contract $w$ is not in any demand set of Ana, and Carol strongly demands $w'$. The primitive contract $w$ is a complement to $w'$ for Bob at the current prices in the sense that all his demand sets including $w'$ also include $w$. An equilibrium of the current prices must include the primitive contract $w'$ as Carol strongly demands it and must exclude $w$ as Ana does not demand it. Then we know that the current prices do not admit an equilibrium since Bob signs $w'$ only when he also signs $w$. Hence, the auctioneer should resolve the disagreement from this complements chain. In our auction, the auctioneer mitigate the disagreement by decreasing Ana's payment in $w$ by one unit, increasing Bob's payment in $w$ by one unit, decreasing Bob's payment in $w'$ by one unit, and increasing Carol's payment in $w'$ by one unit. This price adjustment reduces the value of the Lyapunov function by one unit. In each round of our auction, the auctioneer identifies at least one complements chain at the current prices and adjusts the prices of primitive contracts in the complements chain. Since the value of the Lyapunov function is reduced by at least one unit in each round of the auction, the auction converges to a competitive equilibrium in a finite number of steps. 

The remainder of this paper is organized as follows. Section \ref{Sec_M} introduces the model of multilateral cooperation with transferable utilities. Section \ref{Sec_analysis} analyzes agents' demands under gross complementarity and provides a procedure for verifying whether the current prices admit an equilibrium. We propose the dynamic auction in Section \ref{Sec_auction}. Omitted proofs are provided in the Appendix.

\section{Model\label{Sec_M}}

\subsection{Multilateral cooperation}

We follow the notations of \cite{RY20} to describe the market of multilateral cooperation with transferable utilities. There is a finite set $I$ of agents and a set $X$ of contracts. Each \textbf{contract} $x\in X$ is signed by a subset of agents $\mathrm{N}(x)\subseteq I$. We assume each contract is signed by at least two agents: $|\mathrm{N}(x)|\geq2$ for each $x\in X$.
Each contract $x=(w,\mathbf{t}^w)\in X$ consists of a \textbf{primitive contract} $w\in\Omega$ describing the contract's nonpecuniary part, and a vector of transfers $\mathbf{t}^w\in T^w\subseteq \mathbb{R}^I$ specifying the monetary transfers from (or to, if $t^w_i<0$) each participant. The set of primitive contracts $\Omega$ is finite. For each contract $x\in X$, its primitive contract is denoted as $\tau(x)$; for each set of contracts $Y\subseteq X$, the set of primitive contracts involved in $Y$ is also denoted as $\tau(Y)$: $\tau(Y)\equiv\{\tau(y)|y\in Y\}$. For each contract $x\in X$, its primitive contract $\tau(x)$ is associated with a set of agents $\mathrm{N}(\tau(x))$, which is the same as the set of agents signing $x$: $\mathrm{N}(\tau(x))\equiv \mathrm{N}(x)$.
The transfers specified in each primitive contract $w\in\Omega$ must sum to zero: $T^w=\{\mathbf{t}^w\in \mathbb{R}^{\mathrm{N}(w)}|\sum_{i\in \mathrm{N}(w)}t_i^w=0\}$. For each agent $i\in I$ and each set of contracts $Y\subseteq X$, we write $Y_i\equiv\{y\in Y|i\in \mathrm{N}(y)\}$ as the set of contracts signed by agent $i$. For each agent $i\in I$ and each set of primitive contracts $\Psi\subseteq\Omega$, we write $\Psi_i\equiv\{w\in \Psi|i\in \mathrm{N}(w)\}$ as the set of primitive contracts associated with agent $i$. For each set of contracts $Y\subseteq X$, we write $\mathrm{N}(Y)\equiv\bigcup_{y\in Y}\mathrm{N}(y)$.

Each agent $i\in I$ has a valuation $\mathrm{v}_i:2^{\Omega_i}\rightarrow \mathbb{R}$ over subsets of primitive contracts signed by her. We assume no externalities: Each agent's valuation depends only on the primitive contracts that involve this agent. Agents' preferences over sets of contracts are quasilinear in transfers; and agents do not sign a primitive contract twice. Formally, each agent $i$'s utility of $Y\subseteq X$ is given by
\begin{equation*}
\mathrm{u}_i(Y)=\left\{
\begin{aligned}
&\mathrm{v}_i(\tau(Y)\cap \Omega_i)-\sum_{(w,t^w)\in Y_i}t_i^w,\quad &\text{ if }\tau(x)\neq\tau(x') \text{ for all } x,x'\in Y_i,\\
&-\infty,\quad &\text{otherwise}.
\end{aligned}
\right.
\end{equation*}
Agent $i$'s utility function induces her choice correspondence $C_i : 2^{X_i}\rightrightarrows2^{X_i}$ defined by $\mathrm{C}_i(Y)\equiv\arg\max_S\{\mathrm{u}_i(S) \text{ s.t. } S\subseteq Y\}$. 

An \textbf{outcome} is a set of contracts $Y\subseteq X$ in which different contracts are associated with different primitive contracts: $\tau(x)\neq\tau(x')$ for all $x,x'\in Y$. An outcome $Y\subseteq X$ is \textbf{individually rational} if $Y_i\in \mathrm{C}_i(Y_i)$ for all $i\in I$. For any outcome $Y\subseteq X$, a set of contracts $Z\subseteq X\setminus Y$ \textbf{blocks} $Y$ if for all $i\in \mathrm{N}(Z)$, $Z_i\subseteq Y'$ for some $Y'\in \mathrm{C}_i(Y_i\cup Z_i)$. That is, $Z$ blocks $Y$ if any agent $i\in \mathrm{N}(Z)$ holding the contracts of $Y_i$ is willing to sign all contracts from $Z_i$ while possibly dropping some contracts from $Y_i$.

\begin{definition}\label{def_stable}
\normalfont
An outcome $Y\subseteq X$ is \textbf{stable} if it is individually rational and not blocked by any $Z\subseteq X\setminus Y$.
\end{definition}

Notice that if $Z\subseteq X\setminus Y$ blocks $Y$, we must have $\tau(Z)\cap\tau(Y)=\emptyset$. That is, a block to $Y$ does not involve any primitive contract in $Y$. To see this, suppose there is $z\in Z$ and $y\in Y$ such that $\tau(z)=\tau(y)=w$. Since each agent $i\in N(w)$ chooses $z$ in the presence of $y$, each agent $i$'s payment in $z$ is no more than her payment in $y$. Since all participants' payments sum up to zero in both $z$ and $y$, we know that $z$ and $y$ are the same contract. This contradicts $Z\subseteq X\setminus Y$, and thus we have $\tau(Z)\cap\tau(Y)=\emptyset$.

\subsection{Competitive equilibria}\label{Sec_CE}

The notion of competitive equilibrium provides an intermediate tool in analyzing the market of multilateral cooperation. The price-adjustment process in our auction is essentially a Walrasian-tâtonnement process. Let $K=\{(i,w)|i\in I\text{ and }w\in\Omega_i \}$ be the collection of all agent-primitive contract pairs $(i,w)$ in which the agent $i$ participates in the primitive contract $w$. A price vector $\mathbf{p}\in\mathbb{R}^K$ specifies the prices of each primitive contract for all participants of this primitive contract. For any price vector $\mathbf{p}\in\mathbb{R}^K$, let $\mathbf{p}_i=(p_i^w)_{w\in\Omega_i}$ denote the prices of the primitive contracts from $\Omega_i$ for agent $i$. A price vector $\mathbf{p}\in\mathbb{R}^K$ is called \textbf{balanced} if $\sum_{i\in N(w)}p^w_i=0$ for each $w\in \Omega$; that is, the prices within each primitive contract sum to zero. Let
\begin{equation*}
\mathbb{B}\equiv\{\mathbf{p}\in\mathbb{R}^K | \sum_{i\in N(w)}p^w_i=0 \text{ for each } w\in \Omega\} 
\end{equation*}
be the collection of all balanced price vectors.

Given a price vector $\textbf{p}_i\in \mathbb{R}^{\Omega_i}$ for primitive contracts that involve agent $i$, let $\mathrm{U}_i:2^{\Omega_i}\times\mathbb{R}^{\Omega_i}\rightarrow\mathbb{R}$ be agent $i$'s utility from choosing a set of primitive contracts $\Psi$ at price $\mathbf{p}_i$: $\mathrm{U}_i(\Psi,\mathbf{p}_i)\equiv \mathrm{v}_i(\Psi)-\sum_{w\in\Psi}p_i^w$; her demand correspondence $\mathrm{D}_i : \mathbb{R}^{\Omega_i}\rightrightarrows2^{\Omega_i}$ specifies her optimal choices of the primitive contracts at price $\mathbf{p}_i$: $\mathrm{D}_i(\textbf{p}_i)\equiv\arg\max_{\Psi\in\Omega_i}\{\mathrm{v}_i(\Psi)-\sum_{w\in\Psi}p_i^w\}$; and her indirect utility function $\mathrm{V}_i:\mathbb{R}^{\Omega_i}\rightarrow\mathbb{R}$ gives her maximum utility at $\mathbf{p}_i$:
$\mathrm{V}_i(\textbf{p}_i)\equiv\max_{\Psi\in\Omega_i}\{\mathrm{v}_i(\Psi)-\sum_{w\in\Psi}p_i^w\}$. 

\begin{definition}
\normalfont
A set of primitive contracts $\Phi\subseteq\Omega$ and a balanced price vector $\mathbf{p}\in \mathbb{B}$ constitute a \textbf{competitive equilibrium} $(\Phi,\mathbf{p})$ if $\Phi_i\in \mathrm{D}_i(\mathbf{p}_i)$ for each $i\in I$.
\end{definition}

A balanced price vector $\mathbf{p}\in \mathbb{B}$ is called an \textbf{equilibrium price vector} if there exists a set of primitive contracts $\Phi\subseteq\Omega$ such that $(\Phi,\mathbf{p})$ is a competitive equilibrium. If $\mathbf{p}\in \mathbb{B}$ is not an equilibrium price vector, we call $\mathbf{p}$ a \textbf{non-equilibrium price vector}. A set of primitive contracts $\Phi\subseteq\Omega$ is called \textbf{efficient} if $\Phi\in\arg\max_{\Phi\subseteq\Omega}\sum_{i\in I}\mathrm{v}_i(\Phi_i)$, that is, the aggregate valuation is maximized at $\Phi$. 

The following lemma summarizes the first theorem and a stronger second theorem of welfare economics, where the latter was provided by \citet[Lemma 6]{GS99} in an exchange economy with indivisible goods.

\begin{lemma}\label{lma_effi}
\normalfont
\begin{description}
\item[(i)] If $(\Phi,\mathbf{p})$ is a competitive equilibrium, then $\Phi$ is efficient.\footnote{In RY's framework with externalities, the set of primitive contracts of a competitive equilibrium is conditionally efficient but not necessarily efficient. See Proposition 3 and Example 3 of RY.}
\item[(ii)] If $\mathbf{p}\in \mathbb{B}$ is an equilibrium price vector, and $\Phi\subseteq\Omega$ an efficient set of primitive contracts, then $(\Phi,\mathbf{p})$ is a competitive equilibrium.
\end{description}
\end{lemma}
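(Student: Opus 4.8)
The plan is to prove both parts by the standard ``sum the individual optimality inequalities and exploit balancedness'' argument, which works cleanly here precisely because the transfers within each primitive contract cancel.

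For part (i), I would fix an arbitrary $\Psi\subseteq\Omega$. Since $(\Phi,\mathbf{p})$ is a competitive equilibrium, $\Phi_i\in\mathrm{D}_i(\mathbf{p}_i)$ for every $i\in I$, so $\mathrm{v}_i(\Phi_i)-\sum_{w\in\Phi_i}p_i^w\geq\mathrm{v}_i(\Psi_i)-\sum_{w\in\Psi_i}p_i^w$. Summing over $i\in I$ and interchanging the order of summation in the two transfer terms gives $\sum_{i\in I}\sum_{w\in\Phi_i}p_i^w=\sum_{w\in\Phi}\sum_{i\in\mathrm{N}(w)}p_i^w=0$ and likewise $\sum_{i\in I}\sum_{w\in\Psi_i}p_i^w=0$, both by balancedness of $\mathbf{p}$. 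What remains is $\sum_{i\in I}\mathrm{v}_i(\Phi_i)\geq\sum_{i\in I}\mathrm{v}_i(\Psi_i)$; as $\Psi$ was arbitrary, $\Phi\in\arg\max_{\Psi\subseteq\Omega}\sum_{i\in I}\mathrm{v}_i(\Psi_i)$, i.e., $\Phi$ is efficient.

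For part (ii), let $\mathbf{p}$ be an equilibrium price vector, so there is some $\Phi'\subseteq\Omega$ with $(\Phi',\mathbf{p})$ a competitive equilibrium, and let $\Phi$ be an efficient set. By part (i), $\Phi'$ is also efficient, so $\sum_{i\in I}\mathrm{v}_i(\Phi'_i)=\sum_{i\in I}\mathrm{v}_i(\Phi_i)$. For each $i$, optimality of $\Phi'_i$ in $\mathrm{D}_i(\mathbf{p}_i)$ gives $\mathrm{v}_i(\Phi'_i)-\sum_{w\in\Phi'_i}p_i^w\geq\mathrm{v}_i(\Phi_i)-\sum_{w\in\Phi_i}p_i^w$. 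Summing over $i$ and cancelling the transfer terms via balancedness exactly as in part (i) yields $\sum_{i\in I}\mathrm{v}_i(\Phi'_i)\geq\sum_{i\in I}\mathrm{v}_i(\Phi_i)$, which by the previous sentence holds with equality. Hence each of the individual inequalities must in fact be an equality, so $\mathrm{v}_i(\Phi_i)-\sum_{w\in\Phi_i}p_i^w=\mathrm{v}_i(\Phi'_i)-\sum_{w\in\Phi'_i}p_i^w=\mathrm{V}_i(\mathbf{p}_i)$ for every $i$, i.e., $\Phi_i\in\mathrm{D}_i(\mathbf{p}_i)$. Therefore $(\Phi,\mathbf{p})$ is a competitive equilibrium.

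The only point requiring care — and the closest this proof comes to an obstacle — is the bookkeeping in the double sums: one should note $\{(i,w):i\in I,\ w\in\Phi_i\}=\{(i,w):w\in\Phi,\ i\in\mathrm{N}(w)\}\subseteq K$, which makes the interchange of summation order legitimate and identifies each inner sum $\sum_{i\in\mathrm{N}(w)}p_i^w$ as a quantity that balancedness sets to zero. Note that everything is finite at the level of primitive contracts — the $-\infty$ values in $\mathrm{u}_i$ play no role here, since $\mathrm{D}_i$ ranges over subsets of $\Omega_i$ — so no further care is needed.
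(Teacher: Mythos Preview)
Your proof is correct and follows essentially the same approach as the paper's: both parts sum the individual optimality inequalities $\mathrm{U}_i(\Phi'_i,\mathbf{p}_i)\geq\mathrm{U}_i(\Phi_i,\mathbf{p}_i)$ over $i$, use balancedness of $\mathbf{p}$ to cancel the transfer terms, and for part (ii) conclude that the summed inequality being an equality forces each summand to be an equality. Your extra remarks on the double-sum bookkeeping and on $-\infty$ being irrelevant are sound but not needed beyond what the paper implicitly assumes.
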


We define the Lyapunov function $\mathcal{L}:\mathbb{B}\rightarrow \mathbb{R}$ by
\begin{equation}
\mathcal{L}(\mathbf{p})\equiv\sum_{i\in I}\mathrm{V}_i(\mathbf{p}_i).
\end{equation}
This formula coincides with the conventional form that includes a term of the sum of all items' prices (see, e.g., formula (10) of \citealp{A06}). However, this term equals zero in our setting since the price vector is balanced. Lemma \ref{lma_effi} implies a characterization  for the existence of a competitive equilibrium. 
\begin{lemma}\label{lma_chara}
\normalfont
A balanced price vector $\mathbf{p}'\in \mathbb{B}$ is an equilibrium price vector if and only if $\mathbf{p}'\in\arg\min_{\mathbf{p}\in \mathbb{B}}\mathcal{L}(\mathbf{p})$ and $\mathcal{L}(\mathbf{p}')=\max_{\Phi\subseteq\Omega}\sum_{i\in I}\mathrm{v}_i(\Phi_i)$.
\end{lemma}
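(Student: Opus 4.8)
The plan is to deduce the lemma from a weak-duality bound for the Lyapunov function. Write $M$ for $\max_{\Phi\subseteq\Omega}\sum_{i\in I}\mathrm{v}_i(\Phi_i)$, the maximal aggregate valuation. For any balanced $\mathbf{p}\in\mathbb{B}$ and any $\Phi\subseteq\Omega$, the definition of $\mathrm{V}_i$ gives $\mathrm{V}_i(\mathbf{p}_i)\geq\mathrm{U}_i(\Phi_i,\mathbf{p}_i)=\mathrm{v}_i(\Phi_i)-\sum_{w\in\Phi_i}p^w_i$ for each $i\in I$, since $\Phi_i\subseteq\Omega_i$ is a feasible choice for agent $i$. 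Summing over $i$ and rewriting the aggregate transfer,
\begin{equation*}
\sum_{i\in I}\sum_{w\in\Phi_i}p^w_i=\sum_{w\in\Phi}\sum_{i\in N(w)}p^w_i=0,
\end{equation*}
where the last equality uses balancedness of $\mathbf{p}$, gives $\mathcal{L}(\mathbf{p})\geq\sum_{i\in I}\mathrm{v}_i(\Phi_i)$ for every $\Phi\subseteq\Omega$, and hence $\mathcal{L}(\mathbf{p})\geq M$ for every $\mathbf{p}\in\mathbb{B}$. So $M$ is a uniform lower bound for $\mathcal{L}$ on $\mathbb{B}$, which means the two conditions in the lemma hold for $\mathbf{p}'$ exactly when $\mathcal{L}(\mathbf{p}')=M$: if $\mathcal{L}(\mathbf{p}')=M$ then by the bound $\mathbf{p}'$ is automatically a minimizer, and conversely the conjunction trivially entails $\mathcal{L}(\mathbf{p}')=M$. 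It therefore suffices to show that $\mathbf{p}'$ is an equilibrium price vector if and only if $\mathcal{L}(\mathbf{p}')=M$.

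For necessity, let $(\Phi,\mathbf{p}')$ be a competitive equilibrium, so $\Phi_i\in\mathrm{D}_i(\mathbf{p}'_i)$, i.e.\ $\mathrm{V}_i(\mathbf{p}'_i)=\mathrm{U}_i(\Phi_i,\mathbf{p}'_i)$, for every $i\in I$. Summing over $i$ and cancelling the aggregate transfer by balancedness of $\mathbf{p}'$, exactly as above, yields $\mathcal{L}(\mathbf{p}')=\sum_{i\in I}\mathrm{v}_i(\Phi_i)$. By Lemma~\ref{lma_effi}(i) the set $\Phi$ is efficient, so $\sum_{i\in I}\mathrm{v}_i(\Phi_i)=M$, and therefore $\mathcal{L}(\mathbf{p}')=M$.

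For sufficiency, assume $\mathcal{L}(\mathbf{p}')=M$. Since $\Omega$ is finite, pick an efficient $\Phi^{*}\subseteq\Omega$; then $\sum_{i\in I}\mathrm{v}_i(\Phi^{*}_i)=M=\mathcal{L}(\mathbf{p}')$. Summing the per-agent inequalities $\mathrm{V}_i(\mathbf{p}'_i)\geq\mathrm{U}_i(\Phi^{*}_i,\mathbf{p}'_i)$ and cancelling the aggregate transfer as before gives $\mathcal{L}(\mathbf{p}')\geq\sum_{i\in I}\mathrm{v}_i(\Phi^{*}_i)$; since both sides equal $M$, every per-agent inequality must hold with equality, i.e.\ $\Phi^{*}_i\in\mathrm{D}_i(\mathbf{p}'_i)$ for each $i\in I$. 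Hence $(\Phi^{*},\mathbf{p}')$ is a competitive equilibrium, so $\mathbf{p}'$ is an equilibrium price vector. (Alternatively, once $\mathbf{p}'$ is known to be an equilibrium price vector one may pair it with $\Phi^{*}$ via Lemma~\ref{lma_effi}(ii); but the argument above already exhibits the equilibrium.)

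I do not expect a genuine obstacle: the proof is entirely bookkeeping around weak duality, and the only recurring step needing care --- used three times --- is the identity $\sum_{w\in\Phi}\sum_{i\in N(w)}p^w_i=0$ on $\mathbb{B}$, which is exactly what turns $\mathcal{L}$ into a clean dual objective whose minimal value $M$ detects the existence of a competitive equilibrium. Lemma~\ref{lma_effi}(i) enters only to infer that the supporting set of an equilibrium is efficient; a self-contained variant replaces this by the chain $\mathcal{L}(\mathbf{p}')=\sum_{i\in I}\mathrm{v}_i(\Phi_i)\leq M\leq\mathcal{L}(\mathbf{p}')$, which forces equality throughout.
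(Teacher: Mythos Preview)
Your proof is correct and follows essentially the same weak-duality argument as the paper: both establish $\mathcal{L}(\mathbf{p})\geq M$ from balancedness, then show equality holds precisely at equilibrium price vectors by summing the per-agent optimality conditions. The only cosmetic difference is that you factor out the weak-duality bound first and reduce the biconditional to $\mathcal{L}(\mathbf{p}')=M$, whereas the paper treats the two directions separately and invokes Lemma~\ref{lma_effi}(ii) rather than (i) in the ``only if'' part; your organization is arguably cleaner, and your parenthetical self-contained variant avoiding Lemma~\ref{lma_effi} altogether is a nice touch.
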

This lemma says that a balanced price vector is an equilibrium price vector if and only if it is a minimizer of the Lyapunov function and the minimum of the Lyapunov function is equal to the maximum of the aggregate valuation. In an exchange economy with indivisibilities, \citet[Proposition 1]{A06} derived the ``only if'' part of this lemma, and \citet[Lemma 1]{SY09} strengthened it into a characterization. RY (Lemma 11) also used the ``if'' part in their proof of the existence of a stable outcome.

\subsection{Gross complements}

Now we introduce three equivalent conditions. 

\begin{definition}\label{def_GC}
\normalfont
\begin{description}
  \item[(i)] Primitive contracts are \textbf{gross complements} for agent $i$ if for any price $\mathbf{p}_i\geq \mathbf{q}_i\in \mathbb{R}^{\Omega_i}$ and any $\Phi\in \mathrm{D}_i(\mathbf{p}_i)$, there exists $\Psi\in \mathrm{D}_i(\mathbf{q}_i)$ such that $\{w\in\Phi|p^w_i=q^w_i\}\subseteq\Psi$.
  \item[(ii)] Agent $i$'s valuation $\mathrm{v}_i$ is called \textbf{supermodular} if $\mathrm{v}_i(\Phi)+\mathrm{v}_i(\Psi)\leq \mathrm{v}_i(\Phi\cup \Psi)+\mathrm{v}_i(\Phi\cap \Psi)$ for any $\Phi,\Psi\subseteq \Omega_i$.
  \item[(iii)] Agent $i$'s demand correspondence $\mathrm{D}_i$ is called \textbf{antitone} if for any price $\mathbf{p}_i\geq \mathbf{q}_i\in \mathbb{R}^{\Omega_i}$, any $\Phi\in \mathrm{D}_i(\mathbf{p}_i)$, and any $\Psi\in \mathrm{D}_i(\mathbf{q}_i)$, we have $\Phi\cap\Psi\in \mathrm{D}_i(\mathbf{p}_i)$ and $\Phi\cup\Psi\in \mathrm{D}_i(\mathbf{q}_i)$.
\end{description}
\end{definition}

\begin{lemma}\label{lma_GC}
\normalfont
The following three statements are equivalent.
\begin{description}
  \item[(i)] Primitive contracts are gross complements for agent $i$.
  \item[(ii)] Agent $i$'s valuation $\mathrm{v}_i$ is supermodular.
  \item[(iii)] Agent $i$'s demand correspondence $\mathrm{D}_i$ is antitone.
\end{description}
\end{lemma}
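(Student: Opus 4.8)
The plan is to prove the cycle of implications (ii)$\,\Rightarrow\,$(iii)$\,\Rightarrow\,$(i)$\,\Rightarrow\,$(ii). I will use throughout that $\mathrm{D}_i$ is nonempty-valued, which holds because $\Omega_i$ is finite, and I abbreviate $p(S)\equiv\sum_{w\in S}p_i^w$, a modular function of $S$ (and likewise $q(S)$). The implication (iii)$\,\Rightarrow\,$(i) is immediate: given $\mathbf{p}_i\ge\mathbf{q}_i$ and $\Phi\in\mathrm{D}_i(\mathbf{p}_i)$, choose any $\Psi\in\mathrm{D}_i(\mathbf{q}_i)$; antitonicity yields $\Phi\cup\Psi\in\mathrm{D}_i(\mathbf{q}_i)$, and since $\Phi\cup\Psi\supseteq\Phi\supseteq\{w\in\Phi\mid p_i^w=q_i^w\}$, the set $\Psi'=\Phi\cup\Psi$ witnesses gross complementarity.

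For (ii)$\,\Rightarrow\,$(iii), fix $\mathbf{p}_i\ge\mathbf{q}_i$, $\Phi\in\mathrm{D}_i(\mathbf{p}_i)$ and $\Psi\in\mathrm{D}_i(\mathbf{q}_i)$. I would add the optimality inequality $\mathrm{v}_i(\Phi)-p(\Phi)\ge\mathrm{v}_i(\Phi\cap\Psi)-p(\Phi\cap\Psi)$, coming from $\Phi\in\mathrm{D}_i(\mathbf{p}_i)$, to the optimality inequality $\mathrm{v}_i(\Psi)-q(\Psi)\ge\mathrm{v}_i(\Phi\cup\Psi)-q(\Phi\cup\Psi)$, coming from $\Psi\in\mathrm{D}_i(\mathbf{q}_i)$, and combine with supermodularity $\mathrm{v}_i(\Phi)+\mathrm{v}_i(\Psi)\le\mathrm{v}_i(\Phi\cup\Psi)+\mathrm{v}_i(\Phi\cap\Psi)$ to cancel all valuation terms. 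Using $p(\Phi)=p(\Phi\cap\Psi)+p(\Phi\setminus\Psi)$ and $q(\Phi\cup\Psi)=q(\Psi)+q(\Phi\setminus\Psi)$, what is left is $\sum_{w\in\Phi\setminus\Psi}(q_i^w-p_i^w)\ge 0$; but $\mathbf{q}_i\le\mathbf{p}_i$ makes every summand nonpositive, so this sum equals $0$ and, unwinding, both optimality inequalities and the supermodularity inequality must have held with equality. Equality in the two optimality inequalities says exactly that $\Phi\cap\Psi\in\mathrm{D}_i(\mathbf{p}_i)$ and $\Phi\cup\Psi\in\mathrm{D}_i(\mathbf{q}_i)$.

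The substantive step is (i)$\,\Rightarrow\,$(ii), which I would prove by contraposition after reducing to the pairwise (``local'') form of supermodularity: it suffices to show $\mathrm{v}_i(\Phi\cup\{w\})+\mathrm{v}_i(\Phi\cup\{w'\})\le\mathrm{v}_i(\Phi\cup\{w,w'\})+\mathrm{v}_i(\Phi)$ for every $\Phi\subseteq\Omega_i$ and distinct $w,w'\in\Omega_i\setminus\Phi$, since on the finite lattice $2^{\Omega_i}$ this pairwise inequality implies full supermodularity by a routine induction. Suppose it fails for some $\Phi,w,w'$, and write $a=\mathrm{v}_i(\Phi)$, $b=\mathrm{v}_i(\Phi\cup\{w\})$, $c=\mathrm{v}_i(\Phi\cup\{w'\})$, $d=\mathrm{v}_i(\Phi\cup\{w,w'\})$, so $b+c>a+d$. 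Fix $M>\max_S\mathrm{v}_i(S)-\min_S\mathrm{v}_i(S)$, and at both price vectors set $p_i^u=M$ for $u\notin\Phi\cup\{w,w'\}$ and $p_i^u=-M$ for $u\in\Phi$; then every demanded set, at either vector, must contain $\Phi$ and avoid everything outside $\Phi\cup\{w,w'\}$, hence is one of $\Phi$, $\Phi\cup\{w\}$, $\Phi\cup\{w'\}$, $\Phi\cup\{w,w'\}$, whose utilities involve only $a,b,c,d$ and the prices of $w$ and $w'$. A short calculation shows that $b+c>a+d$ is exactly what allows one to pick prices $q_i^w=\alpha$, $q_i^{w'}=\beta$ with $\mathrm{D}_i(\mathbf{q}_i)=\{\,\Phi\cup\{w'\}\,\}$ and then raise only the price of $w'$ to some $\beta'>\beta$, producing $\mathbf{p}_i\ge\mathbf{q}_i$ with $\mathrm{D}_i(\mathbf{p}_i)=\{\,\Phi\cup\{w\}\,\}$. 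But then gross complements, applied to $\Phi\cup\{w\}\in\mathrm{D}_i(\mathbf{p}_i)$ with $w'\notin\Phi\cup\{w\}$ the only contract whose price moved, would force some $\Psi\in\mathrm{D}_i(\mathbf{q}_i)=\{\,\Phi\cup\{w'\}\,\}$ with $\Phi\cup\{w\}\subseteq\Psi$, which is impossible since $w\notin\Phi\cup\{w'\}$. This contradiction proves the local inequality and hence (ii).

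I expect the construction in (i)$\,\Rightarrow\,$(ii) to be the only delicate point. The reason for passing to singleton perturbations $w,w'$ is that a direct attempt to realize two prescribed demands $\Phi$ and $\Psi$ at a \emph{comparable} pair of price vectors generally yields incomparable vectors; working with a single pair of contracts keeps only four sets (and four values) in play, moves only one price, and reduces every constraint to the single inequality $b+c>a+d$, after which the choice of $\alpha,\beta,\beta'$ is routine.
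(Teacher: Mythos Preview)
Your proof is correct. Each of the three implications is argued cleanly: (iii)$\Rightarrow$(i) is the trivial direction; in (ii)$\Rightarrow$(iii) the three nonnegative quantities (the two optimality slacks and the supermodularity slack) sum to $\sum_{w\in\Phi\setminus\Psi}(q_i^w-p_i^w)\le 0$, forcing all three to vanish; and in (i)$\Rightarrow$(ii) the choice of extreme prices $\pm M$ correctly pins every demanded set to one of the four candidates, after which the open interval $\max(b-c+\beta,\,d-c)<\alpha<b-a$ is nonempty precisely because $b+c>a+d$ and $\beta<c-a$, so suitable $\alpha,\beta,\beta'$ exist.

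It is worth noting that the paper does not supply its own proof of this lemma: it attributes the equivalence (ii)$\Leftrightarrow$(iii) to Lemma~2 of \cite{RY20} and the equivalence of (i) with the other two to Theorem~1 of \cite{Y23}. Your argument is therefore a self-contained substitute for those external references. The route you take---Topkis-style cancellation for (ii)$\Rightarrow$(iii), and a two-contract localization for (i)$\Rightarrow$(ii)---is standard in the monotone comparative statics literature and is essentially what the cited papers do; the only thing to flag is that your reduction to the pairwise form of supermodularity is indeed valid on $2^{\Omega_i}$, but a reader might appreciate one sentence pointing to the standard induction that upgrades the local inequality to the global one.
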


\citet[Lemma 2]{RY20} defined the gross complements condition by isotonicity of the demand correspondence $\mathrm{D}_i$ and showed the equivalence between (ii) and (iii). The isotonicity of $\mathrm{D}_i$ is useful in analyzing agents' demands as it implies that $\mathrm{D}_i(\mathbf{p}_i)$ is a lattice (with respect to set inclusion) for each $\mathbf{p}_i\in\mathbb{R}^{\Omega_i}$. 

\citet[Theorem 1]{Y23} proposed the condition (i) and showed that it is equivalent to RY's definition. The condition (i) is the counterpart to the gross substitutes condition of \cite{KC82} and provides the following interpretation: Primitive contracts are gross complements for an agent if lowering the prices of some primitive contracts will not decrease her demand for any other primitive contracts.

Supermodularity has the following interpretation of increasing marginal returns: Agent $i$'s valuation $\mathrm{v}_i$ is supermodular if and only if for any $\Phi\subset\Psi\subseteq \Omega_i$ and any $w'\in\Omega_i\setminus\Psi$, $\mathrm{v}_i(\Phi\cup\{w'\})-\mathrm{v}_i(\Phi)\leq \mathrm{v}_i(\Psi\cup\{w'\})-\mathrm{v}_i(\Psi)$. The equivalence between (i) and (ii) is in contrast to the following relation found by \cite{GS99}: Submodularity is weaker than gross substitutability for a monotone valuation.

When primitive contracts are gross complements for all agents, RY (Proposition 3 and Theorem 2) proved the following statements.

\begin{lemma}\label{lma_RY}
\normalfont
If primitive contracts are gross complements for all agents, then
\begin{description}
  \item[(i)] the collection of efficient sets of primitive contracts is a nonempty lattice, and
  \item[(ii)] an outcome $Y\subseteq X$ is stable if and only if $\tau(Y)$ is the largest efficient set of primitive contracts and there is a competitive equilibrium $(\tau(Y),\mathbf{p})$ such that $t_i^w=p_i^w$ for each $(w,t^w)\in Y$ and $i\in I$.
\end{description}
\end{lemma}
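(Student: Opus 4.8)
For part (i), nonemptiness is immediate because $\Omega$ is finite, so $\sum_{i\in I}\mathrm{v}_i(\cdot)$ attains its maximum $M$. For the lattice property I would take two efficient sets $\Phi,\Psi\subseteq\Omega$ and use that $(\Phi\cup\Psi)_i=\Phi_i\cup\Psi_i$ and $(\Phi\cap\Psi)_i=\Phi_i\cap\Psi_i$, so that supermodularity of each $\mathrm{v}_i$ (Lemma \ref{lma_GC}(ii)) gives $\mathrm{v}_i(\Phi_i)+\mathrm{v}_i(\Psi_i)\le \mathrm{v}_i((\Phi\cup\Psi)_i)+\mathrm{v}_i((\Phi\cap\Psi)_i)$ for every $i$; summing over $i$ yields $2M\le \sum_i\mathrm{v}_i((\Phi\cup\Psi)_i)+\sum_i\mathrm{v}_i((\Phi\cap\Psi)_i)$, and since each of the two sums on the right is at most $M$, both equal $M$. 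Hence $\Phi\cup\Psi$ and $\Phi\cap\Psi$ are efficient.

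For part (ii) I would prove the two directions separately, and the ``if'' direction is self-contained. Suppose $\tau(Y)$ is the largest efficient set and $(\tau(Y),\mathbf{p})$ is a competitive equilibrium with $t_i^w=p_i^w$ for all $(w,t^w)\in Y$. Individual rationality follows because for $S\subseteq Y_i$ the matching transfers give $\mathrm{u}_i(S)=\mathrm{v}_i(\tau(S))-\sum_{w\in\tau(S)}p_i^w\le \mathrm{V}_i(\mathbf{p}_i)=\mathrm{u}_i(Y_i)$, using $\tau(Y)_i\in\mathrm{D}_i(\mathbf{p}_i)$. For non-blocking, suppose $Z\subseteq X\setminus Y$ blocks $Y$; by the observation following Definition \ref{def_stable}, $\tau(Z)\cap\tau(Y)=\emptyset$. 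For each $i\in\mathrm{N}(Z)$ pick $Y^i\in\mathrm{C}_i(Y_i\cup Z_i)$ with $Z_i\subseteq Y^i$ and set $\Psi^i=\tau(Y^i)\subseteq\tau(Y)_i\cup\tau(Z)_i$. Writing the transfers in $Z$ as $s$, a short computation gives $\mathrm{u}_i(Y^i)=\bigl[\mathrm{v}_i(\Psi^i)-\sum_{w\in\Psi^i}p_i^w\bigr]+\sum_{w\in\tau(Z)_i}(p_i^w-s_i^w)\le \mathrm{V}_i(\mathbf{p}_i)+\sum_{w\in\tau(Z)_i}(p_i^w-s_i^w)$; together with $\mathrm{u}_i(Y^i)\ge\mathrm{u}_i(Y_i)=\mathrm{V}_i(\mathbf{p}_i)$ this forces $\sum_{w\in\tau(Z)_i}(p_i^w-s_i^w)\ge 0$. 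Summing over $i\in\mathrm{N}(Z)$ and using that both $\mathbf{p}$ and each contract of $Z$ are balanced makes the total exactly zero, so every inequality is tight and $\Psi^i\in\mathrm{D}_i(\mathbf{p}_i)$. Antitonicity of $\mathrm{D}_i$ (Lemma \ref{lma_GC}(iii)) then yields $\tau(Y)_i\cup\Psi^i\in\mathrm{D}_i(\mathbf{p}_i)$, and since $\tau(Z)_i\subseteq\Psi^i\subseteq\tau(Y)_i\cup\tau(Z)_i$ this set equals $(\tau(Y)\cup\tau(Z))_i$; for $i\notin\mathrm{N}(Z)$ the same set equals $\tau(Y)_i\in\mathrm{D}_i(\mathbf{p}_i)$. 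Hence $(\tau(Y)\cup\tau(Z),\mathbf{p})$ is a competitive equilibrium, so by Lemma \ref{lma_effi}(i) the set $\tau(Y)\cup\tau(Z)$ is efficient; but it strictly contains $\tau(Y)$, contradicting maximality. Therefore $Y$ is stable.

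For the ``only if'' direction I would start from a stable $Y$ and reduce everything to one key claim: there is a balanced $\mathbf{p}\in\mathbb{B}$ with $p_i^w=t_i^w$ on $\tau(Y)$ and $\tau(Y)_i\in\mathrm{D}_i(\mathbf{p}_i)$ for every $i$. Individual rationality pins down $\mathbf{p}$ on $\tau(Y)$ and makes $\tau(Y)_i$ optimal among subsets of $\tau(Y)_i$; applying the non-blocking condition to candidate blocks whose primitive contracts form an arbitrary nonempty $W\subseteq\Omega\setminus\tau(Y)$ produces, for each such $W$, a strictly negative upper bound on the aggregate willingness of the agents in $\mathrm{N}(W)$ to append $W$, and gross complementarity --- through the lattice structure of the demand correspondences and of the efficient sets --- is what lets all of these bounds be satisfied simultaneously by a single balanced vector. (A naive per--contract threshold choice of $\mathbf{p}$ does \emph{not} suffice: supermodularity makes the joint value of adding several new primitive contracts exceed the sum of the individual increments in the wrong direction, so the construction must use the full lattice argument; this is exactly RY's Theorem~2, specialized to the externality--free setting here, which I would invoke or reproduce.) Granting the equilibrium $(\tau(Y),\mathbf{p})$, Lemma \ref{lma_effi}(i) makes $\tau(Y)$ efficient; letting $\Phi$ be the largest efficient set (which exists by part~(i)), Lemma \ref{lma_effi}(ii) makes $(\Phi,\mathbf{p})$ a competitive equilibrium, so $\Phi_i\in\mathrm{D}_i(\mathbf{p}_i)$ for all $i$; if $\Phi\supsetneq\tau(Y)$, the contracts $\{(w,(p_i^w)_{i\in\mathrm{N}(w)}):w\in\Phi\setminus\tau(Y)\}$ would block $Y$, a contradiction, so $\tau(Y)=\Phi$ is the largest efficient set. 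The step I expect to be the main obstacle is precisely this price--extension claim: converting the combinatorial non--blocking condition into the existence of one supporting balanced price vector is where gross complementarity is indispensable, while the remaining steps are routine manipulations with the definition of blocking and Lemmas \ref{lma_effi} and \ref{lma_GC}.
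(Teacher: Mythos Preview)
Your argument for part~(i) is essentially identical to the paper's: it sums the supermodularity inequality across agents and observes that both $\Phi\cup\Psi$ and $\Phi\cap\Psi$ must then attain the maximum.

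For part~(ii), the paper does \emph{not} give its own proof; it simply attributes the result to \cite{RY20} (their Proposition~3 and Theorem~2). Your proposal therefore goes well beyond what the paper records. Your ``if'' direction is complete and correct: the balancedness trick that forces all the inequalities $\sum_{w\in\tau(Z)_i}(p_i^w-s_i^w)\ge 0$ to be equalities, followed by the lattice property of $\mathrm{D}_i(\mathbf{p}_i)$ to conclude $(\tau(Y)\cup\tau(Z))_i\in\mathrm{D}_i(\mathbf{p}_i)$, is exactly the right mechanism and cleanly yields the contradiction with maximality of $\tau(Y)$. For the ``only if'' direction you correctly isolate the only nontrivial step---extending the transfers on $\tau(Y)$ to a balanced price vector supporting $\tau(Y)$ as a demand for every agent---and acknowledge that this is precisely RY's Theorem~2; the remaining deductions (efficiency of $\tau(Y)$ via Lemma~\ref{lma_effi}(i), and maximality via the would-be block at prices $\mathbf{p}$) are routine and correctly stated. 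So your proposal is sound, and in fact supplies more detail than the paper itself, which defers the whole of part~(ii) to the cited reference.
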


The part (i) of this lemma is straightforward. There is clearly at least one efficient set of primitive contracts. Suppose $\Phi$ and $\Psi$ are two efficient sets of primitive contracts. For each agent $i\in I$, gross complementarity and Lemma \ref{lma_GC} indicate $\mathrm{v}_i(\Phi_i)+\mathrm{v}_i(\Psi_i)\leq \mathrm{v}_i(\Phi_i\cup \Psi_i)+\mathrm{v}_i(\Phi_i\cap \Psi_i)$. Adding up the inequalities of all agents, we have $\sum_{i\in I}\mathrm{v}_i(\Phi_i)+\sum_{i\in I}\mathrm{v}_i(\Psi_i)\leq\sum_{i\in I}\mathrm{v}_i(\Phi_i\cup \Psi_i)+\sum_{i\in I}\mathrm{v}_i(\Phi_i\cap \Psi_i)$. Hence, $\Phi\cup \Psi$ and $\Phi\cap \Psi$ are also efficient. 

Recall that stability is the solution concept we concern, while competitive equilibrium is an intermediate tool. The part (ii) of this lemma connects the two notions: Under gross complementarity, a stable outcome corresponds to a special competitive equilibrium, that is, a competitive equilibrium with the largest efficient set of primitive contracts.

\section{Demand analysis}\label{Sec_analysis}

At each round of our auction, if the current price vector is $\mathbf{p}\in \mathbb{B}$, each agent $i\in I$ is asked to report all her demand sets from $\mathrm{D}_i(\mathbf{p}_i)$. Then, the auctioneer's responsibility is to (i) verify whether $\mathbf{p}$ is an equilibrium price vector, (ii) if $\mathbf{p}$ is an equilibrium price vector, find out a competitive equilibrium with the largest efficient set of primitive contracts, which by Lemma \ref{lma_RY} corresponds to a stable outcome, and (iii) if $\mathbf{p}$ is a non-equilibrium price vector, determine how to adjust $\mathbf{p}$ so as to decrease the value of the Lyapunov function $\mathcal{L}(\mathbf{p})$. In this section, we analyze agents' demands under gross complementarity and provide answers to the questions (i) and (ii). We address the question (iii) in Section \ref{Sec_auction}. 

\subsection{Strongly demanded primitive contracts}\label{Sec_strong}
We say that a set $\Psi\subseteq\Omega_i$ of primitive contracts is agent $i$'s \textbf{demand set} at $\mathbf{p}_i\in\mathbb{R}^{\Omega_i}$ if $\Psi\in \mathrm{D}_i(\mathbf{p}_i)$, and that agent $i$ \textbf{demands} a primitive contract $w\in\Omega_i$ at $\mathbf{p}_i\in\mathbb{R}^{\Omega_i}$ if $w\in\Psi$ for some $\Psi\in \mathrm{D}_i(\mathbf{p}_i)$. For any agent $i\in I$ and any price vector $\mathbf{p}_i\in\mathbb{R}^{\Omega_i}$, define
\begin{equation*}
\overline{\mathrm{d}}^1_i(\mathbf{p}_i)\equiv\bigcup_{\Psi^j\in \mathrm{D}_i(\mathbf{p}_i)}\Psi^j \qquad\text{ and }\qquad \underline{\mathrm{d}}_i(\mathbf{p}_i)\equiv\bigcap_{\Psi^j\in \mathrm{D}_i(\mathbf{p}_i)}\Psi^j
\end{equation*} 
to be the union of all elements from $\mathrm{D}_i(\mathbf{p}_i)$ and the intersection of all elements from $\mathrm{D}_i(\mathbf{p}_i)$, respectively.\footnote{The meaning of the superscript ``1'' of $\overline{\mathrm{d}}^1_i(\mathbf{p}_i)$ will be clear in Section \ref{Sec_verify}.} If primitive contracts are gross complements for agent $i$, then $\overline{\mathrm{d}}^1_i(\mathbf{p}_i)$ is her largest demand set at $\mathbf{p}_i$, and $\underline{\mathrm{d}}_i(\mathbf{p}_i)$ her smallest demand set at $\mathbf{p}_i$. We say that agent $i$ \textbf{strongly demands} $w$ at $\mathbf{p}_i$ if $w\in\underline{\mathrm{d}}_i(\mathbf{p}_i)$, that is, the primitive contract $w$ is in every demand set of agent $i$ at $\mathbf{p}_i$. Let
\begin{equation*}
\mathrm{H}(\mathbf{p})\equiv\{w|w\in\underline{\mathrm{d}}_i(\mathbf{p}_i)\text{ for some }i\in I\}
\end{equation*}
be the collection of all strongly demanded primitive contracts at $\mathbf{p}\in \mathbb{B}$. If $\mathrm{H}(\mathbf{p})=\emptyset$, and primitive contracts are gross complements for all agents, then each agent's smallest demand set at $\mathbf{p}$ is $\emptyset$. In this case, $(\emptyset,\mathbf{p})$ is a competitive equilibrium but (probably) does not correspond to a stable outcome since $\emptyset$ is (probably) not the largest efficient set of primitive contracts.

\subsection{Partially agreeable primitive contracts}\label{Sec_partial}

For any balanced price vector $\mathbf{p}\in \mathbb{B}$, we call a primitive contract $w\in\Omega$ \textbf{partially agreeable} at $\mathbf{p}$ if it is demanded by some agent (i.e., $w\in\overline{\mathrm{d}}^1_j(\mathbf{p}_j)$ for some $j\in \mathrm{N}(w)$) but not demanded by some other agent (i.e.,  $w\notin\overline{\mathrm{d}}^1_{j'}(\mathbf{p}_{j'})$ for some $j'\in \mathrm{N}(w)$). Let
\begin{equation*}
\mathrm{G}^1(\mathbf{p})\equiv\{w|w\in\overline{\mathrm{d}}^1_j(\mathbf{p}_j)
\text{ and }w\notin\overline{\mathrm{d}}^1_{j'}(\mathbf{p}_{j'})\text{ for some } j,j'\in \mathrm{N}(w)\}
\end{equation*}
be the collection of all partially agreeable primitive contracts at $\mathbf{p}\in \mathbb{B}$. The notion of partially agreeable primitive contracts will be generalized in Section \ref{Sec_verify}.

Suppose primitive contracts are gross complements for all agents, and there are no partially agreeable primitive contracts: $\mathrm{G}^1(\mathbf{p})=\emptyset$. In this case, we know that $(\bigcup_{i\in I}\overline{\mathrm{d}}^1_i(\mathbf{p}_i),\mathbf{p})$ is a competitive equilibrium\footnote{This is because $\mathrm{G}^1(\mathbf{p})=\emptyset$ indicates $\overline{\mathrm{d}}^1_i(\mathbf{p}_i)=\Omega_i\cap(\bigcup_{i\in I}\overline{\mathrm{d}}^1_i(\mathbf{p}_i))\in \mathrm{D}_i(\mathbf{p}_i)$ for each agent $i\in I$.} in which each agent $i\in I$ signs the primitive contracts from her largest demand set $\overline{\mathrm{d}}^1_i(\mathbf{p}_i)$. Then, the part (i) of Lemma \ref{lma_effi} indicates that the set $\bigcup_{i\in I}\overline{\mathrm{d}}^1_i(\mathbf{p}_i)$ is efficient. Suppose $\bigcup_{i\in I}\overline{\mathrm{d}}^1_i(\mathbf{p}_i)$ is not the largest efficient set of primitive contracts, then the part (i) of Lemma \ref{lma_RY} implies that there is a larger efficient set of primitive contracts $\Phi\supset\bigcup_{i\in I}\overline{\mathrm{d}}^1_i(\mathbf{p}_i)$. The part (ii) of Lemma \ref{lma_effi} indicates that $(\Phi,\mathbf{p})$ is also a competitive equilibrium. This is impossible since $\Phi$ must contain a contract $w\in\Omega\setminus\bigcup_{i\in I}\overline{\mathrm{d}}^1_i(\mathbf{p}_i)$ that is not demanded by some agent $i\in N(w)$ at $\mathbf{p}_i$. Therefore, the set $\bigcup_{i\in I}\overline{\mathrm{d}}^1_i(\mathbf{p}_i)$ is the largest efficient set of primitive contracts, and thus, the part (ii) of Lemma \ref{lma_RY} indicates that the competitive equilibrium $(\bigcup_{i\in I}\overline{\mathrm{d}}^1_i(\mathbf{p}_i),\mathbf{p})$ corresponds to a stable outcome.

Notice that if $(\Phi,\mathbf{p})$ is a competitive equilibrium, a primitive contract $w\in H(\mathbf{p})$ is strongly demanded by some agent $i\in I$ at $\mathbf{p}_i$, and a primitive contract $w'\in G^1(\mathbf{p})$ is partially agreeable at $\mathbf{p}$, then $w\in\Phi$ and $w'\notin\Phi$. Hence, if at $\mathbf{p}\in \mathbb{B}$ there exists a primitive contract $w\in\mathrm{G}^1(\mathbf{p})\cap \mathrm{H}(\mathbf{p})$ that is both strongly demanded by an agent and partially agreeable as it is not demanded by another agent, then $\mathbf{p}$ is a non-equilibrium price vector. Now suppose there is no primitive contract of this sort at $\mathbf{p}$ (i.e., $\mathrm{G}^1(\mathbf{p})\cap \mathrm{H}(\mathbf{p})=\emptyset$). We remove all partially agreeable primitive contracts and let $\Omega^2(\mathbf{p})=\Omega\setminus \mathrm{G}^1(\mathbf{p})$ be the set of remaining primitive contracts. If primitive contracts are gross complements for all agents, each agent's smallest demand set at $\mathbf{p}$ belongs to $\Omega^2(\mathbf{p})$ since $\mathrm{G}^1(\mathbf{p})\cap \mathrm{H}(\mathbf{p})=\emptyset$. It means that each agent has at least one demand set that belongs to $\Omega^2$. For each agent $i\in I$, let $\overline{\mathrm{d}}^2_i(\mathbf{p}_i)\equiv\bigcup_{\Psi^s\in \mathrm{D}_i(\mathbf{p}_i):\Psi^s\subseteq\Omega^2}\Psi^s$ be the union of elements from $\mathrm{D}_i(\mathbf{p}_i)$ that belong to $\Omega^2$, which is agent $i$'s largest demand set confined in $\Omega^2(\mathbf{p})$. Let $\mathrm{G}^2(\mathbf{p})\equiv\{w|w\in\overline{\mathrm{d}}^2_j(\mathbf{p}_j)$ and $w\notin\overline{\mathrm{d}}^2_{j'}(\mathbf{p}_{j'})$ for some $j,j'\in \mathrm{N}(w)\}$ be the set of ``partially agreeable'' primitive contracts when we only consider agents' demand sets confined in $\Omega^2(\mathbf{p})$. Again, we have the following observations: (i) If $\mathrm{G}^2(\mathbf{p})=\emptyset$, then $(\bigcup_{i\in I}\overline{\mathrm{d}}^2_i(\mathbf{p}_i),\mathbf{p})$ corresponds to a stable outcome; (ii) if $\mathrm{G}^2(\mathbf{p})\cap \mathrm{H}(\mathbf{p})\neq\emptyset$, then $\mathbf{p}$ is a non-equilibrium price vector;\footnote{Notice that if $(\Phi,\mathbf{p})$ is a competitive equilibrium, then $w\in G^1(\mathbf{p})$ implies $w\notin\Phi$, and thus we know $\Phi\subseteq\Omega^2(\mathbf{p})$; and we further know that $w\in G^2(\mathbf{p})$ also implies $w\notin\Phi$; but since $w\in H(\mathbf{p})$ implies $w\in\Phi$, we know that $\mathbf{p}$ is a non-equilibrium price vector if $\mathrm{G}^2(\mathbf{p})\cap \mathrm{H}(\mathbf{p})\neq\emptyset$.} and (iii) if $\mathrm{G}^2(\mathbf{p})\neq\emptyset$ and $\mathrm{G}^2(\mathbf{p})\cap \mathrm{H}(\mathbf{p})=\emptyset$, we can repeat the above arguments to define $\Omega^3(\mathbf{p})$ and $\mathrm{G}^3(\mathbf{p})$ and implement the same examination. 

\subsection{Verifying equilibrium price vectors}\label{Sec_verify}

The above discussion implies the following procedure for verifying whether a balanced price vector $\mathbf{p}\in\mathbb{B}$ is an equilibrium price vector. This procedure identifies a stable outcome if $\mathbf{p}$ is an equilibrium price vector. Let $\Omega^1(\mathbf{p})\equiv\Omega$.

\bigskip

\noindent \textbf{A procedure for verifying equilibrium price vectors} 

\bigskip

Step $k,k\geq1$: 
\begin{description}
  \item[(i)] If $\mathrm{G}^k(\mathbf{p})=\emptyset$, then $(\bigcup_{i\in I}\overline{\mathrm{d}}^k_i(\mathbf{p}_i),\mathbf{p})$ is a competitive equilibrium in which $\bigcup_{i\in I}\overline{\mathrm{d}}^k_i(\mathbf{p}_i)$ is the largest efficient set of primitive contracts. The procedure terminates by producing the stable outcome $\{(w,\mathbf{t}^w)|w\in\bigcup_{i\in I}\overline{\mathrm{d}}^k_i(\mathbf{p}_i)\text{ and }t_i^w=p_i^w\text{ for each }i\in I\}$.
  \item[(ii)] If $\mathrm{G}^k(\mathbf{p})\cap \mathrm{H}(\mathbf{p})\neq\emptyset$, then $\mathbf{p}$ is a non-equilibrium price vector, and the procedure terminates.
  \item[(iii)] If $\mathrm{G}^k(\mathbf{p})\neq\emptyset$ and $\mathrm{G}^k(\mathbf{p})\cap \mathrm{H}(\mathbf{p})=\emptyset$, then define
\begin{equation}\label{def_proce}
\begin{aligned}
&\Omega^{k+1}(\mathbf{p})\equiv \Omega^k(\mathbf{p})\setminus \mathrm{G}^k(\mathbf{p}),\\
&\overline{\mathrm{d}}^{k+1}_i(\mathbf{p}_i)\equiv\bigcup_{\Psi^l\in \mathrm{D}_i(\mathbf{p}_i):\Psi^l\subseteq\Omega^{k+1}(\mathbf{p})}\Psi^l\qquad\qquad\text{ for each agent }i\in I,\\
&\mathrm{G}^{k+1}(\mathbf{p})\equiv\{w|w\in\overline{\mathrm{d}}^{k+1}_j(\mathbf{p}_j)\text{ and }w\notin\overline{\mathrm{d}}^{k+1}_{j'}(\mathbf{p}_{j'}) \text{ for some }j,j'\in \mathrm{N}(w)\},
\end{aligned}
\end{equation}
\end{description}
and go to the next step.
\bigskip

At each Step $k\geq2$, primitive contracts from $\mathrm{G}^k(\mathbf{p})$ are those ``partially agreeable'' when we only consider agents' demand sets confined in $\Omega^k(\mathbf{p})$, where primitive contracts from $\Omega^k(\mathbf{p})$ are those ``not partially agreeable'' at Step $k-1$. The set $\overline{\mathrm{d}}^{k}_i(\mathbf{p}_i)$ is agent $i$'s largest demand set confined in $\Omega^k(\mathbf{p})$. The procedure must terminate at some step since the primitive contracts are finite. If the procedure terminates at Step $s$, then for each $k\in\{1,\ldots,s\}$, we call primitive contracts from $\mathrm{G}^k(\mathbf{p})$ \textbf{level-$k$ partially agreeable} at $\mathbf{p}$.

At each step $k\geq1$, the procedure first examines whether there are primitive contracts partially agreeable of the current level: $\mathbf{p}$ is an equilibrium price vector if $\mathrm{G}^k(\mathbf{p})=\emptyset$. If $\mathrm{G}^k(\mathbf{p})\neq\emptyset$, the procedure examines whether there are strongly demanded primitive contracts partially agreeable of the current level: $\mathbf{p}$ is a non-equilibrium price vector if $\mathrm{G}^k(\mathbf{p})\cap \mathrm{H}(\mathbf{p})\neq\emptyset$. If $\mathrm{G}^k(\mathbf{p})\neq\emptyset$ and $\mathrm{G}^k(\mathbf{p})\cap \mathrm{H}(\mathbf{p})=\emptyset$ at Step $k$, the procedure removes level-$k$ partially agreeable primitive contracts and goes to the next step.

In the Appendix, we formally prove that the above procedure correctly verifies whether a balanced price vector $\mathbf{p}\in \mathbb{B}$ is an equilibrium price vector. Our analysis also implies that, under gross complementarity, a balanced price vector $\mathbf{p}\in \mathbb{B}$ is a non-equilibrium price vector if and only if there exists a strongly demanded primitive contract $w\in \mathrm{H}(\mathbf{p})$ that is partially agreeable of some level.

\begin{example}\label{exam_main}
\normalfont
There are three agents $i_1,i_2$, and $i_3$, and five primitive contracts $w_1,w_2,\cdots,w_5$. Agent $i_1$ and agent $i_2$ can sign $w_1,w_4$, and $w_5$; agent $i_1$ and agent $i_3$ can sign $w_2$ and $w_3$. Suppose primitive contracts are gross complements for all agents, and agents' demand sets at some price vector $\mathbf{p}\in \mathbb{B}$ is given by 
\begin{align*}
&\mathrm{D}_{i_1}(\mathbf{p})=\{\{w_1,w_2\},\{w_1,w_2,w_3,w_4\}\}, \quad \mathrm{D}_{i_2}(\mathbf{p})=\{\{w_1\},\{w_1,w_4,w_5\}\},\\ &\mathrm{D}_{i_3}(\mathbf{p})=\{\emptyset,\{w_2,w_3\}\}
\end{align*}
There are two strongly demanded primitive contracts: the primitive contract $w_1$ is strongly demanded by the agents $i_1$ and $i_2$, and the primitive contract $w_2$ is strongly demanded by the agent $i_1$. We have $\mathrm{H}(\mathbf{p})=\{w_1,w_2\}$. The procedure for verifying whether $\mathbf{p}$ is an equilibrium price vector proceeds as follows.

Step 1. There is one level-1 partially agreeable primitive contract $w_5$, which is demanded by the agent $i_2$ but not demanded by the agent $i_1$. We have $\mathrm{G}^1(\mathbf{p})=\{w_5\}$, and thus $\mathrm{G}^1(\mathbf{p})\cap \mathrm{H}(\mathbf{p})=\emptyset$. After removing $w_5$, we have $\Omega^2(\mathbf{p})=\{w_1,w_2,w_3,w_4\}$. The agents' largest demand sets confined in $\Omega^2(\mathbf{p})$ are $\overline{\mathrm{d}}^2_{i_1}(\mathbf{p}_{i_1})=\{w_1,w_2,w_3,w_4\}$, $\overline{\mathrm{d}}^2_{i_2}(\mathbf{p}_{i_2})=\{w_1\}$, and $\overline{\mathrm{d}}^2_{i_3}(\mathbf{p}_{i_3})=\{w_2,w_3\}$. There is one level-2 partially agreeable primitive contract $w_4$, and thus $\mathrm{G}^2(\mathbf{p})=\{w_4\}$.

Step 2. $\mathrm{G}^2(\mathbf{p})\cap\mathrm{H}(\mathbf{p})=\emptyset$. We have $\Omega^3(\mathbf{p})=\{w_1,w_2,w_3\}$, $\overline{\mathrm{d}}^3_{i_1}(\mathbf{p}_{i_1})=\{w_1,w_2\}$, $\overline{\mathrm{d}}^3_{i_2}(\mathbf{p}_{i_2})=\{w_1\}$, $\overline{\mathrm{d}}^3_{i_3}(\mathbf{p}_{i_3})=\{w_2,w_3\}$, and $\mathrm{G}^3(\mathbf{p})=\{w_3\}$.

Step 3. $\mathrm{G}^3(\mathbf{p})\cap\mathrm{H}(\mathbf{p})=\emptyset$. We have $\Omega^4(\mathbf{p})=\{w_1,w_2\}$, $\overline{\mathrm{d}}^4_{i_1}(\mathbf{p}_{i_1})=\{w_1,w_2\}$, $\overline{\mathrm{d}}^4_{i_2}(\mathbf{p}_{i_2})=\{w_1\}$, $\overline{\mathrm{d}}^4_{i_3}(\mathbf{p}_{i_3})=\emptyset$, and $\mathrm{G}^4(\mathbf{p})=\{w_2\}$.

Step 4. $\mathrm{G}^4(\mathbf{p})\cap\mathrm{H}(\mathbf{p})=\{w_2\}$. Hence, the price vector $\mathbf{p}$ is a non-equilibrium price vector. The procedure terminates.
\end{example}

\section{The dynamic auction\label{Sec_auction}}

Four assumptions are critical in the auction of this paper. The first one is the assumption of transferable utilities, and the second one is gross complementarity. We make two additional assumptions. The third one is that \textbf{agents' valuations take integer values}: $\mathrm{v}_i(\Psi)\in \mathbb{Z}$ for each agent $i\in I$ and $\Psi\subseteq\Omega_i$. This assumption is natural in real life since we cannot value a primitive contract more closely than to the nearest penny.

The last assumption is that \textbf{agents report truthfully} in the auction. We discuss on the strategic issues below. 

\subsection{Strategic issues}\label{Sec_strategy}

The following example shows that two agents collaborating together have incentives to underreport their benefits in a direct mechanism. 

\begin{example}
\normalfont
Consider a simple cooperation between Ana and Bob, who can sign a primitive contract $w$. We assume $\mathrm{v}_{Ana}(\{w\})=2$, $\mathrm{v}_{Bob}(\{w\})=3$, and $\mathrm{v}_{Ana}(\emptyset)=\mathrm{v}_{Bob}(\emptyset)=0$. The set of stable outcomes are
\begin{equation*}
\{\{(w,(t^w_{Ana},t^w_{Bob}))\}|t^w_{Ana}+t^w_{Bob}=0, t^w_{Ana}\leq 2,\text{ and }t^w_{Bob}\leq 3\}. 
\end{equation*} 
Consider a direct mechanism that always selects a stable outcome. If this mechanism selects an outcome $\{(w,(t^w_{Ana},t^w_{Bob}))\}$ with $t^w_{Ana}\geq0$ when both agents report their valuations truthfully, then Ana has an incentive to report $\mathrm{v}_{Ana}(\{w\})=a\in(-3,0)$ since the mechanism would instead choose an outcome from $\{\{(w,(t^w_{Ana},t^w_{Bob}))\}|t^w_{Ana}+t^w_{Bob}=0, t^w_{Ana}\leq a,\text{ and }t^w_{Bob}\leq 3\}$. Similarly, if this mechanism selects an outcome $\{(w,(t^w_{Ana},t^w_{Bob}))\}$ with $t^w_{Ana}<0$ (and thus $t^w_{Bob}>0$) when both agents report their valuations truthfully, then Bob has an incentive to report $\mathrm{v}_{Bob}(\{w\})\in(-2,0]$. 
\end{example}

According to the revelation principle, any problem subsuming this example does not admit an incentive-compatible mechanism that always selects a stable outcome. Clearly, the agents' valuations in this example are supermodular, and thus agents' preferences satisfy gross complementarity. As a consequence, there is no incentive-compatible mechanism that always produces a stable outcome in the problem we study.

\subsection{Complements chain}\label{Sec_CC}

Suppose the procedure for verifying whether $\mathbf{p}\in\mathbb{B}$ is an equilibrium price vector terminates at the first step as $\mathrm{G}^1(\mathbf{p})\cap \mathrm{H}(\mathbf{p})\neq\emptyset$. We then know that $\mathbf{p}$ is a non-equilibrium price vector. Pick a primitive contract $w\in \mathrm{G}^1(\mathbf{p})\cap \mathrm{H}(\mathbf{p})$. Suppose $w$ is strongly demanded by agent $i$ but not demanded by agent $j$, then one can decrease the value of the Lyapunov function $\mathcal{L}(\mathbf{p})$ by both increasing $p^w_i$ by one unit and decreasing $p^w_j$ by one unit. Under the new price vector, since agent $i$'s valuation takes integer values, agent $i$'s indirect utility decreases by one unit,\footnote{This is because (i) each of agent $i$'s optimal choices at $\mathbf{p}$ yields a utility $\mathrm{V}_i(\mathbf{p}_i)-1$ at the new price vector, and (ii) each of agent $i$'s suboptimal choices at $\mathbf{p}$ yields at most a utility $\mathrm{V}_i(\mathbf{p}_i)-1$ at both $\mathbf{p}$ and the new price vector.} and agent $j$'s indirect utility remains unchanged.\footnote{This is because (i) each of agent $j$'s optimal choices at $\mathbf{p}$ yields the same utility $\mathrm{V}_j(\mathbf{p}_j)$ at $\mathbf{p}$ and the new price vector, and (ii) each of agent $j$'s suboptimal choices at $\mathbf{p}$ yields at most $\mathrm{V}_j(\mathbf{p}_j)-1$ at $\mathbf{p}$ and at most $\mathrm{V}_j(\mathbf{p}_j)$ at the new price vector.} Other agents' indirect utilities are also unchanged. Therefore, the value of the Lyapunov function $\mathcal{L}(\mathbf{p})$ decreases by one unit.

More generally, when the procedure for verifying whether $\mathbf{p}\in\mathbb{B}$ is an equilibrium price vector terminates at Step $s\geq1$ as $\mathrm{G}^s(\mathbf{p})\cap \mathrm{H}(\mathbf{p})\neq\emptyset$, we want to identify a structure called complements chain. We call a primitive contract $w'\in \Omega_i$ a \textbf{complement} to another primitive contract $w\in \Omega_i$ for agent $i$ at a price vector $\mathbf{p}_i\in\mathbb{R}^{\Omega_i}$ if (i) $w\in\Psi$ for some $\Psi\in \mathrm{D}_i(\mathbf{p}_i)$, and (ii) $w\in\Psi$ and $\Psi\in \mathrm{D}_i(\mathbf{p}_i)$ imply $w'\in\Psi$. That is, agent $i$ chooses the primitive contract $w$ at price vector $\mathbf{p}_i$ only when she also chooses the primitive contract $w'$. For each primitive contract $w$ that belongs to some demand set of agent $i$ at $\mathbf{p}_i$, let $\underline{\mathrm{d}}_i^w(\mathbf{p}_i)$ be the intersection of all agent $i$'s demand sets that contain $w$. That is,
\begin{equation*}
\underline{\mathrm{d}}_i^w(\mathbf{p}_i)\equiv\bigcap_{\Psi^j\in \mathrm{D}_i(\mathbf{p}_i)\,:\, w\in\Psi^j}\Psi^j \qquad\text{ where }w\in\Psi\text{ for some }\Psi\in \mathrm{D}_i(\mathbf{p}_i)
\end{equation*}
Notice that $w'$ is a complement to $w$ for agent $i$ at $\mathbf{p}_i$ if and only if $w'\in\underline{\mathrm{d}}_i^w(\mathbf{p}_i)$. Moreover, when primitive contracts are gross complements for agent $i$, the set $\underline{\mathrm{d}}_i^w(\mathbf{p}_i)$ is also agent $i$'s demand set at $\mathbf{p}_i$.

Suppose at $\mathbf{p}$ a primitive contract $w\in G^k(\mathbf{p})$ is level-$k\geq2$ partially agreeable as it becomes not demanded by some agent $j\in \mathrm{N}(w)$ at Step $k$ of the procedure. The primitive contract $w$ is not ``partially agreeable'' at Step $k-1$ and belongs to at least one of $j$'s demand sets confined in $\Omega^{k-1}$. Since agent $j$ does not demand $w$ at Step $k$, the set $\underline{\mathrm{d}}_j^w(\mathbf{p}_j)$ does not belong to any of agent $j$'s demand sets confined in $\Omega^k$. Thus, there must be a primitive contract $w'\in\underline{\mathrm{d}}_j^w(\mathbf{p}_j)$ removed at Step $k-1$. Clearly, the primitive contract $w'$ belongs to $\mathrm{G}^{k-1}(\mathbf{p})\cap\overline{\mathrm{d}}^{k-1}_j(\mathbf{p}_j)$ and is a complement to $w$ for agent $j$ at $\mathbf{p}_j$. We have the following lemma.

\begin{lemma}\label{lma_complements}
\normalfont
If the procedure for verifying whether $\mathbf{p}\in\mathbb{B}$ is an equilibrium price vector stops at Step $s\geq2$, then for each $k\in\{2,\cdots,s\}$, each level-$k$ partially agreeable primitive contract $w\in \mathrm{G}^k(\mathbf{p})$, and each agent $j\in \mathrm{N}(w)$ such that $w\notin\overline{\mathrm{d}}^k_j(\mathbf{p}_j)$, there is a level-$(k-1)$ partially agreeable primitive contract $w'\in \mathrm{G}^{k-1}(\mathbf{p})\cap\overline{\mathrm{d}}^{k-1}_j(\mathbf{p}_j)$ that is a complement to $w$ for agent $j$.
\end{lemma}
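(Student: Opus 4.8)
The plan is to formalize the informal argument that precedes the lemma statement, which already sketches the essential idea. The key is to track how a primitive contract $w$ transitions from being "demanded" (i.e., belonging to $\overline{\mathrm{d}}^{k-1}_j(\mathbf{p}_j)$) at Step $k-1$ to being "not demanded" at Step $k$, and to pin down the role of the set $\underline{\mathrm{d}}_j^w(\mathbf{p}_j)$, which by the remark before the lemma is itself a demand set of agent $j$ at $\mathbf{p}_j$ under gross complementarity.

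First I would fix $k\in\{2,\ldots,s\}$, a primitive contract $w\in \mathrm{G}^k(\mathbf{p})$, and an agent $j\in \mathrm{N}(w)$ with $w\notin\overline{\mathrm{d}}^k_j(\mathbf{p}_j)$. Since $w\in \mathrm{G}^k(\mathbf{p})$, by the definition of $\mathrm{G}^k(\mathbf{p})$ there is some $j'\in\mathrm{N}(w)$ with $w\in\overline{\mathrm{d}}^k_{j'}(\mathbf{p}_{j'})$; in particular $w\in\Omega^k(\mathbf{p})$, hence $w\in\Omega^{k-1}(\mathbf{p})$ and $w\notin \mathrm{G}^{k-1}(\mathbf{p})$. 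I also need to observe that $w\in\overline{\mathrm{d}}^{k-1}_j(\mathbf{p}_j)$: this follows because $w\notin \mathrm{G}^{k-1}(\mathbf{p})$ but $w$ is demanded by agent $j'$ at the relevant level (one should check $w\in\overline{\mathrm{d}}^{k-1}_{j'}(\mathbf{p}_{j'})$, which holds since $\overline{\mathrm{d}}^k_{j'}\subseteq\overline{\mathrm{d}}^{k-1}_{j'}$ because $\Omega^k\subseteq\Omega^{k-1}$), so were $w\notin\overline{\mathrm{d}}^{k-1}_j(\mathbf{p}_j)$ we would have $w\in \mathrm{G}^{k-1}(\mathbf{p})$, a contradiction. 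Thus $w$ belongs to at least one demand set of $j$ contained in $\Omega^{k-1}(\mathbf{p})$.

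Next I would examine $\underline{\mathrm{d}}_j^w(\mathbf{p}_j)$. By the remark preceding the lemma, under gross complementarity this set is a demand set of agent $j$ at $\mathbf{p}_j$. I claim $\underline{\mathrm{d}}_j^w(\mathbf{p}_j)\not\subseteq\Omega^k(\mathbf{p})$: if it were contained in $\Omega^k(\mathbf{p})$, then since $w\in\underline{\mathrm{d}}_j^w(\mathbf{p}_j)$ we would get $w\in\overline{\mathrm{d}}^k_j(\mathbf{p}_j)$, contradicting our hypothesis. Here I should be slightly careful: I want $\underline{\mathrm{d}}_j^w(\mathbf{p}_j)$ to be a nonempty set that legitimately witnesses membership — it is a demand set containing $w$, so $w\in\underline{\mathrm{d}}_j^w(\mathbf{p}_j)$ indeed, and if it sat inside $\Omega^k$ it would contribute to $\overline{\mathrm{d}}^k_j$. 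Therefore some primitive contract $w'\in\underline{\mathrm{d}}_j^w(\mathbf{p}_j)$ lies outside $\Omega^k(\mathbf{p})=\Omega^{k-1}(\mathbf{p})\setminus \mathrm{G}^{k-1}(\mathbf{p})$. But $\underline{\mathrm{d}}_j^w(\mathbf{p}_j)\subseteq\overline{\mathrm{d}}^{k-1}_j(\mathbf{p}_j)\subseteq\Omega^{k-1}(\mathbf{p})$ — this inclusion needs a short justification: $\underline{\mathrm{d}}_j^w(\mathbf{p}_j)$ is, under gross complementarity, a demand set of $j$ contained in every demand set of $j$ containing $w$, and since there is a demand set of $j$ containing $w$ that lies in $\Omega^{k-1}(\mathbf{p})$ (established above), by antitonicity/lattice structure $\underline{\mathrm{d}}_j^w(\mathbf{p}_j)$ is itself such a demand set, hence $\subseteq\Omega^{k-1}(\mathbf{p})$ and $\subseteq\overline{\mathrm{d}}^{k-1}_j(\mathbf{p}_j)$. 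Combining, $w'\in\Omega^{k-1}(\mathbf{p})$ but $w'\notin\Omega^k(\mathbf{p})$ forces $w'\in \mathrm{G}^{k-1}(\mathbf{p})$, and $w'\in\overline{\mathrm{d}}^{k-1}_j(\mathbf{p}_j)$; finally $w'\in\underline{\mathrm{d}}_j^w(\mathbf{p}_j)$ means exactly that $w'$ is a complement to $w$ for agent $j$ at $\mathbf{p}_j$. This gives the desired $w'\in \mathrm{G}^{k-1}(\mathbf{p})\cap\overline{\mathrm{d}}^{k-1}_j(\mathbf{p}_j)$.

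The main obstacle is the lattice/demand-set bookkeeping in the step showing $\underline{\mathrm{d}}_j^w(\mathbf{p}_j)$ is a demand set of $j$ contained in $\Omega^{k-1}(\mathbf{p})$: one has to argue carefully, using gross complementarity (antitonicity of $\mathrm{D}_j$ and the consequent lattice structure of $\mathrm{D}_j(\mathbf{p}_j)$ under intersection), that the intersection of the demand sets containing $w$ — when at least one such demand set sits inside $\Omega^{k-1}(\mathbf{p})$ — is again a demand set that sits inside $\Omega^{k-1}(\mathbf{p})$. Concretely, if $\Psi\in \mathrm{D}_j(\mathbf{p}_j)$ with $w\in\Psi\subseteq\Omega^{k-1}(\mathbf{p})$, then for any other $\Psi'\in \mathrm{D}_j(\mathbf{p}_j)$ with $w\in\Psi'$, the set $\Psi\cap\Psi'$ is again in $\mathrm{D}_j(\mathbf{p}_j)$ and contains $w$; iterating over all such $\Psi'$ and intersecting keeps the result inside $\Psi\subseteq\Omega^{k-1}(\mathbf{p})$ and still a demand set. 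Everything else is a matter of unwinding the recursive definitions \eqref{def_proce}, and the chain of inclusions $\Omega^s\subseteq\cdots\subseteq\Omega^1=\Omega$ together with the corresponding monotonicity $\overline{\mathrm{d}}^k_i\subseteq\overline{\mathrm{d}}^{k-1}_i$.
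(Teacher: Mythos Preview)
Your proposal is correct and follows exactly the paper's own argument, which is the informal paragraph immediately preceding the lemma; you have simply made each step rigorous, in particular the justification that $\underline{\mathrm{d}}_j^w(\mathbf{p}_j)\subseteq\Omega^{k-1}(\mathbf{p})$ via the lattice structure of $\mathrm{D}_j(\mathbf{p}_j)$. There is nothing to add: the paper treats that paragraph as the proof and states the lemma as its summary.
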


Lemma \ref{lma_complements} indicates a structure called ``complements chain'' at a non-equilibrium price vector.

\begin{definition}\label{def_chain}
\normalfont
If the procedure for verifying whether $\mathbf{p}\in\mathbb{B}$ is an equilibrium price vector stops at Step $s\geq1$ as $\mathrm{G}^s(\mathbf{p})\cap \mathrm{H}(\mathbf{p})\neq\emptyset$, a chain $i^1w^1i^2w^2\cdots w^si^{s+1}$ with $w^l\in \Omega_{i^l}\cap \Omega_{i^{l+1}}$ for all $1\leq l\leq s$ is called a \textbf{complements chain} at $\mathbf{p}\in \mathbb{B}$ if 
\begin{description}
  \item[(i)] $w^l\in \mathrm{G}^l(\mathbf{p})$, $w^l\notin\overline{\mathrm{d}}^l_{i^l}(\mathbf{p}_{i^l})$, and $w^l\in\overline{\mathrm{d}}^l_{i^{l+1}}(\mathbf{p}_{i^{l+1}})$ for all $1\leq l\leq s$; 
  \item[(ii)] $w^s\in \mathrm{H}(\mathbf{p})$ and $w^s\in\underline{\mathrm{d}}_{i^{s+1}}(\mathbf{p}_{i^{s+1}})$; and
  \item[(iii)] if $s\geq2$, then $w^{l-1}$ is a complement to $w^l$ for agent $i^l$ at $\mathbf{p}_{i^l}$ for all $2\leq l\leq s$.
\end{description}
\end{definition}

A complements chain at $\mathbf{p}\in \mathbb{B}$ starts with an agent $i^1$ and a (level-1) partially agreeable primitive contract $w^1$ that is not demanded by $i^1$ at $\mathbf{p}_{i^1}$. The chain ends with an agent $i^{s+1}$ who strongly demands $w^s$ at $\mathbf{p}_{i^{s+1}}$. The $l$th contract in the chain is level-$l$ partially agreeable as $w^l\notin\overline{\mathrm{d}}^l_{i^l}(\mathbf{p}_{i^l})$ and $w^l\in\overline{\mathrm{d}}^l_{i^{l+1}}(\mathbf{p}_{i^{l+1}})$. If the chain contains more than one primitive contract, then the $(l-1)$th one is a complement to the $l$th one for the agent between the two primitive contracts at the current price vector for all $2\leq l\leq s$. Notice that there are actually no complements in a complements chain with only one primitive contract (i.e., $s=1$). In a complements chain, the primitive contracts are distinct since the sets of partially agreeable contracts of different levels are disjoint, but an agent may appear multiple times in the chain. However, two neighbour agents $i^l$ and $i^{l+1}$ are distinct since $w^l\notin\overline{\mathrm{d}}^l_{i^l}(\mathbf{p}_{i^l})$ and $w^l\in\overline{\mathrm{d}}^l_{i^{l+1}}(\mathbf{p}_{i^{l+1}})$.

Suppose there is a complements chain $i^1w^1i^2w^2\cdots w^si^{s+1}$ at a balanced price vector $\mathbf{p}\in \mathbb{B}$. Then, for any competitive equilibrium $(\mathbf{p},\Phi)$ at $\mathbf{p}$, we have $w^s\in\Phi$ since the agent $i^{s+1}$ strongly demands $w^s$ at $\mathbf{p}_{i^{s+1}}$, and $w^1\notin\Phi$ since the agent $i^1$ does not want to sign $w^1$ at $\mathbf{p}_{i^1}$. However, the complements chain at $\mathbf{p}$ indicates that the agent $i^s$ signs $w^s$ at $\mathbf{p}$ only if she also signs $w^{s-1}$; and the agent $i^{s-1}$ signs $w^{s-1}$ at $\mathbf{p}$ only if she also signs $w^{s-2}$; and so on. These relations imply that the primitive contract $w^s$ can be signed at $\mathbf{p}$ only if the primitive contract $w^1$ is also signed. This contradicts that the agent $i^1$ does not want to sign $w^1$ at $\mathbf{p}_{i^1}$. Therefore, we know that a balanced price vector $\mathbf{p}\in \mathbb{B}$ is a non-equilibrium price vector if there exists a complements chain at $\mathbf{p}$. The converse of this statement is also true.

\begin{proposition}\label{prop_chain}
\normalfont
When primitive contracts are gross complements for all agents, a balanced price vector $\mathbf{p}\in \mathbb{B}$ is a non-equilibrium price vector if and only if there exists a complements chain at $\mathbf{p}$.
\end{proposition}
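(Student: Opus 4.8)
The \textbf{if} direction is precisely the converse remarked on in the paragraph preceding the statement, so the plan is to record it in one stroke and devote the work to the \textbf{only if} direction. For \textbf{if}: if $i^1w^1i^2\cdots w^si^{s+1}$ is a complements chain at $\mathbf{p}$ and $(\Phi,\mathbf{p})$ were a competitive equilibrium, then part (ii) of Definition \ref{def_chain} gives $w^s\in\underline{\mathrm{d}}_{i^{s+1}}(\mathbf{p}_{i^{s+1}})\subseteq\Phi_{i^{s+1}}$, hence $w^s\in\Phi$; part (i) at $l=1$ gives $w^1\notin\overline{\mathrm{d}}^1_{i^1}(\mathbf{p}_{i^1})$, hence $w^1\notin\Phi_{i^1}$, and since $i^1\in\mathrm{N}(w^1)$ this yields $w^1\notin\Phi$; but a downward induction on $l$ using part (iii) — at each stage $w^l\in\Phi_{i^l}$ forces $\Phi_{i^l}\supseteq\underline{\mathrm{d}}^{w^l}_{i^l}(\mathbf{p}_{i^l})\ni w^{l-1}$, hence $w^{l-1}\in\Phi$ — yields $w^1\in\Phi$, a contradiction. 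So no equilibrium exists at $\mathbf{p}$.

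For \textbf{only if}, suppose $\mathbf{p}$ is a non-equilibrium price vector. I would invoke the correctness of the verification procedure of Section \ref{Sec_verify} (proved in the Appendix): the procedure terminates at some Step $s\geq1$, and since a case-(i) termination would exhibit a competitive equilibrium at $\mathbf{p}$, it must be a case-(ii) termination, i.e. $\mathrm{G}^s(\mathbf{p})\cap\mathrm{H}(\mathbf{p})\neq\emptyset$. Fix $w^s\in\mathrm{G}^s(\mathbf{p})\cap\mathrm{H}(\mathbf{p})$ and anchor the chain as follows. Since $w^s\in\mathrm{H}(\mathbf{p})$, choose $i^{s+1}$ with $w^s\in\underline{\mathrm{d}}_{i^{s+1}}(\mathbf{p}_{i^{s+1}})$; because the procedure reached Step $s$ we have $\mathrm{G}^k(\mathbf{p})\cap\mathrm{H}(\mathbf{p})=\emptyset$ for all $k<s$, so no strongly demanded primitive contract is removed before Step $s$, and $\underline{\mathrm{d}}_{i^{s+1}}(\mathbf{p}_{i^{s+1}})$ — which is $i^{s+1}$'s smallest demand set by gross complementarity and Lemma \ref{lma_GC} — still lies in $\Omega^s(\mathbf{p})$, whence $w^s\in\overline{\mathrm{d}}^s_{i^{s+1}}(\mathbf{p}_{i^{s+1}})$. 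Since $w^s\in\mathrm{G}^s(\mathbf{p})$, choose $i^s\in\mathrm{N}(w^s)$ with $w^s\notin\overline{\mathrm{d}}^s_{i^s}(\mathbf{p}_{i^s})$. This supplies part (ii) of Definition \ref{def_chain} and the level-$s$ instance of part (i).

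The core step is then a downward induction on $l=s,s-1,\dots,2$: given $w^l\in\mathrm{G}^l(\mathbf{p})$ and $i^l\in\mathrm{N}(w^l)$ with $w^l\notin\overline{\mathrm{d}}^l_{i^l}(\mathbf{p}_{i^l})$, apply Lemma \ref{lma_complements} with $k=l$ and $j=i^l$ to get a level-$(l-1)$ partially agreeable $w^{l-1}\in\mathrm{G}^{l-1}(\mathbf{p})\cap\overline{\mathrm{d}}^{l-1}_{i^l}(\mathbf{p}_{i^l})$ that is a complement to $w^l$ for $i^l$, and then, since $w^{l-1}\in\mathrm{G}^{l-1}(\mathbf{p})$, pick $i^{l-1}\in\mathrm{N}(w^{l-1})$ with $w^{l-1}\notin\overline{\mathrm{d}}^{l-1}_{i^{l-1}}(\mathbf{p}_{i^{l-1}})$. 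Reading off the successive outputs produces $i^1w^1i^2w^2\cdots w^si^{s+1}$ (at $l=1$, $w^1\notin\overline{\mathrm{d}}^1_{i^1}(\mathbf{p}_{i^1})$ simply means $i^1$ does not demand $w^1$), and I would then check that this is a complements chain: parts (i) and (iii) are exactly what the induction delivered, part (ii) is the anchoring, the $w^l$ are distinct because the sets $\mathrm{G}^l(\mathbf{p})=\Omega^l(\mathbf{p})\setminus\Omega^{l+1}(\mathbf{p})$ are pairwise disjoint, and $i^l\neq i^{l+1}$ since $w^l\notin\overline{\mathrm{d}}^l_{i^l}(\mathbf{p}_{i^l})$ while $w^l\in\overline{\mathrm{d}}^l_{i^{l+1}}(\mathbf{p}_{i^{l+1}})$; the case $s=1$ is the degenerate instance with no induction step and part (iii) vacuous. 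I expect the main obstacle to be the existence bookkeeping rather than any single inequality: one must be sure at each stage that the demand sets witnessing the relevant memberships in $\overline{\mathrm{d}}^k$ and $\underline{\mathrm{d}}^w_i$ actually live inside the shrinking sets $\Omega^k(\mathbf{p})$ — which rests on the procedure not having stopped before Step $s$ and, crucially, on $\underline{\mathrm{d}}^{w^l}_{i^l}(\mathbf{p}_{i^l})$ being itself a demand set (gross complementarity, via Lemma \ref{lma_GC}) — and that each use of Lemma \ref{lma_complements} lowers the level by exactly one, so the construction terminates at level $1$.
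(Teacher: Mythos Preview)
Your proposal is correct and follows essentially the same approach as the paper: the ``if'' direction is exactly the chain-of-complements contradiction argument in the paragraph preceding the proposition, and the ``only if'' direction is the backward construction via the verification procedure and Lemma~\ref{lma_complements} that the paper spells out in the paragraph immediately following the proposition (including the same footnote observation that $\underline{\mathrm{d}}_{i^{s+1}}(\mathbf{p}_{i^{s+1}})\subseteq\Omega^s(\mathbf{p})$). Your write-up is slightly more explicit about the bookkeeping (distinctness of the $w^l$, $i^l\neq i^{l+1}$, and the role of gross complementarity in ensuring $\underline{\mathrm{d}}^{w^l}_{i^l}$ is itself a demand set), but there is no substantive difference.
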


If $\mathbf{p}\in \mathbb{B}$ is a non-equilibrium price vector,
the procedure for verifying whether $\mathbf{p}$ is an equilibrium price vector must stop at some Step $s$ as $\mathrm{G}^s(\mathbf{p})\cap \mathrm{H}(\mathbf{p})\neq\emptyset$. Then we can obtain a complements chain as follows. Pick a primitive contract $w^s$ from $\mathrm{G}^s(\mathbf{p})\cap \mathrm{H}(\mathbf{p})$ and let $i^{s+1}$ be an agent who strongly demands $w^s$ (i.e., $w^s\in\underline{\mathrm{d}}_{i^{s+1}}(\mathbf{p}_i)$); we thus have $w^s\in\overline{\mathrm{d}}^s_{i^{s+1}}(\mathbf{p}_{i^{s+1}})$.\footnote{Since all primitive contracts from $\underline{\mathrm{d}}_{i^{s+1}}(\mathbf{p}_i)$ are in $\mathrm{H}(\mathbf{p})$ and thus are still in $\Omega^s(\mathbf{p})$, we have $\underline{\mathrm{d}}_{i^{s+1}}(\mathbf{p}_i)\subseteq\overline{\mathrm{d}}^s_{i^{s+1}}(\mathbf{p}_{i^{s+1}})$ and thus $w^s\in\overline{\mathrm{d}}^s_{i^{s+1}}(\mathbf{p}_{i^{s+1}})$.} Let $i^s\in \mathrm{N}(w^s)$ be an agent who does not demand $w^s$ at Step $s$ (i.e., $w^s\notin\overline{\mathrm{d}}^s_{i^s}(\mathbf{p}_i)$). If $s=1$, we obtain a complements chain $i^1w^1i^2$. If $s\geq2$, by Lemma \ref{lma_complements} we can choose a primitive contract $w^{s-1}\in \mathrm{G}^{s-1}(\mathbf{p})\cap\overline{\mathrm{d}}^{s-1}_{i^s}(\mathbf{p}_{i^s})$ that is a complement to $w^s$ for agent $i^s$. We then continue to choose an agent $i^{s-1}\in \mathrm{N}(w^{s-1})$ who does not demand $w^{s-1}$ at Step $s-1$: $w^{s-1}\notin\overline{\mathrm{d}}^{s-1}_{i^{s-1}}(\mathbf{p}_{i^{s-1}})$, and, if $s\geq3$, a primitive contract $w^{s-2}\in \mathrm{G}^{s-2}(\mathbf{p})\cap\overline{\mathrm{d}}^{s-2}_{i^{s-1}}(\mathbf{p}_{i^{s-1}})$ that is a complement to $w^{s-1}$ for agent $i^{s-1}$. We can repeat this construction backward the procedure until we reach a level-1 partially agreeable primitive contract $w^1$ and an agent $i^1\in \mathrm{N}(w^1)$ who does not demand $w^1$ at $\mathbf{p}_{i^1}$. For instance, using this method we can find a complements chain $i_1w_5i_2w_4i_1w_3i_3w_2i_1$ in Example \ref{exam_main}. 

\subsection{Price adjustment at a non-equilibrium price vector}\label{Sec_adjust}

Once we find a complements chain at a non-equilibrium price vector $\mathbf{p}$, we can adjust this price vector according to the following lemma so as to decrease the value of the Lyapunov function $\mathcal{L}(\mathbf{p})$.

\begin{lemma}\label{lma_adjust}
\normalfont
Suppose the procedure for verifying whether $\mathbf{p}\in\mathbb{B}$ is an equilibrium price vector stops at Step $s\geq1$ as $\mathrm{G}^s(\mathbf{p})\cap \mathrm{H}(\mathbf{p})\neq\emptyset$, and $i^1w^1i^2w^2\cdots w^si^{s+1}$ is a complements chain. Let $\mathbf{p}'\in \mathbb{B}$ be a balanced price vector such that
\begin{equation}\label{newprice}
p^{'w^l}_{i^l}=p^{w^l}_{i^l}-1 \quad\text{ and }\quad p^{'w^l}_{i^{l+1}}=p^{w^l}_{i^{l+1}}+1 \qquad\text{ for all }1\leq l\leq s
\end{equation} 
and other coordinates of $\mathbf{p}'$ are the same as those of $\mathbf{p}$, then we have $\mathrm{V}_i(\mathbf{p}'_i)=\mathrm{V}_i(\mathbf{p}_i)$ for all $i\in\{i^1,\ldots,i^s\}\setminus\{i^{s+1}\}$, and $\mathrm{V}_{i^{s+1}}(\mathbf{p}'_{i^{s+1}})=\mathrm{V}_{i^{s+1}}(\mathbf{p}_{i^{s+1}})-1$.
\end{lemma}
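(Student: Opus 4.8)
The plan is to reduce the statement to a per-agent claim and then estimate $\mathrm{V}_i$ from both sides.

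First I would check that $\mathbf{p}'$ is well defined and lies in $\mathbb{B}$: the $2s$ coordinates $(i^l,w^l)$ and $(i^{l+1},w^l)$ prescribed in (\ref{newprice}) are pairwise distinct (the $w^l$ are distinct and $i^l\neq i^{l+1}$), and for each $w^l$ the two changes $-1$ and $+1$ cancel, so balancedness is preserved. Since only coordinates of agents in $\{i^1,\dots,i^{s+1}\}$ are altered, $\mathrm{V}_i(\mathbf{p}'_i)=\mathrm{V}_i(\mathbf{p}_i)$ for every $i\notin\{i^1,\dots,i^{s+1}\}$, so it suffices to fix an agent $i$ occurring in the chain, set $c_i=-1$ if $i=i^{s+1}$ and $c_i=0$ otherwise, and prove $\mathrm{V}_i(\mathbf{p}'_i)=\mathrm{V}_i(\mathbf{p}_i)+c_i$. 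Writing $A_i=\{w^l:i^l=i\}$ and $B_i=\{w^l:i^{l+1}=i\}$ for the primitive contracts whose $i$-price falls, resp.\ rises, by one, one has $A_i\cap B_i=\emptyset$ (neighbours in the chain differ) and, for every $\Psi\subseteq\Omega_i$,
\[
\mathrm{U}_i(\Psi,\mathbf{p}'_i)=\mathrm{U}_i(\Psi,\mathbf{p}_i)+|\Psi\cap A_i|-|\Psi\cap B_i|.
\]

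Next I would read off the combinatorics of the chain from Definition~\ref{def_chain} and the verification procedure: for $w^l\in A_i$ one has $w^l\notin\overline{\mathrm{d}}^l_i(\mathbf{p}_i)$; if $l\geq2$ then $w^{l-1}\in B_i$ and, by clause~(iii), $w^{l-1}$ is a complement to $w^l$ for $i$ at $\mathbf{p}_i$ (every demand set of $i$ at $\mathbf{p}_i$ containing $w^l$ contains $w^{l-1}$), while if $l=1$ then, as $\Omega^1(\mathbf{p})=\Omega$, $w^1$ lies outside the largest demand set $\overline{\mathrm{d}}^1_i(\mathbf{p}_i)$, hence in no demand set of $i$ at $\mathbf{p}_i$. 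The map $w^l\mapsto w^{l-1}$ is a bijection of $A_i\setminus\{w^1\}$ onto $B_i\setminus\{w^s\}$; moreover $w^s\in B_i$ exactly when $i=i^{s+1}$, and then $w^s\in\underline{\mathrm{d}}_i(\mathbf{p}_i)$ by clause~(ii). Finally, $\underline{\mathrm{d}}_i(\mathbf{p}_i)\subseteq\mathrm{H}(\mathbf{p})$ is disjoint from $\mathrm{G}^1(\mathbf{p}),\dots,\mathrm{G}^{s-1}(\mathbf{p})$, so $\underline{\mathrm{d}}_i(\mathbf{p}_i)$ is a demand set confined in $\Omega^l(\mathbf{p})$ for every $l\le s$; hence $\underline{\mathrm{d}}_i(\mathbf{p}_i)\subseteq\overline{\mathrm{d}}^l_i(\mathbf{p}_i)$, giving $\underline{\mathrm{d}}_i(\mathbf{p}_i)\cap A_i=\emptyset$, while $\underline{\mathrm{d}}_i(\mathbf{p}_i)\cap B_i$ equals $\{w^s\}$ if $i=i^{s+1}$ and $\emptyset$ otherwise (because $B_i\setminus\{w^s\}\subseteq\bigcup_{l<s}\mathrm{G}^l(\mathbf{p})$).

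The lower bound is then immediate: evaluating the identity at $\Psi=\underline{\mathrm{d}}_i(\mathbf{p}_i)$ gives $\mathrm{U}_i(\underline{\mathrm{d}}_i(\mathbf{p}_i),\mathbf{p}'_i)=\mathrm{V}_i(\mathbf{p}_i)-|\underline{\mathrm{d}}_i(\mathbf{p}_i)\cap B_i|=\mathrm{V}_i(\mathbf{p}_i)+c_i$, so $\mathrm{V}_i(\mathbf{p}'_i)\geq\mathrm{V}_i(\mathbf{p}_i)+c_i$. For the reverse inequality, call $\Psi\subseteq\Omega_i$ \emph{closed} if $w^1\notin\Psi$ whenever $w^1\in A_i$, $w^s\in\Psi$ whenever $w^s\in B_i$, and $w^{l-1}\in\Psi$ whenever $w^l\in\Psi\cap A_i$ with $l\geq2$; by the bijection above, a closed $\Psi$ satisfies $|\Psi\cap A_i|\le|\Psi\cap B_i|+c_i$, hence $\mathrm{U}_i(\Psi,\mathbf{p}'_i)\le\mathrm{U}_i(\Psi,\mathbf{p}_i)+c_i\le\mathrm{V}_i(\mathbf{p}_i)+c_i$. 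Every demand set of $i$ at $\mathbf{p}_i$ is closed by the facts just listed, so this already bounds $\mathrm{U}_i(\cdot,\mathbf{p}'_i)$ on $\mathrm{D}_i(\mathbf{p}_i)$; what remains is to obtain the same bound for \emph{every} $\Psi\subseteq\Omega_i$, i.e.\ $\mathrm{V}_i(\mathbf{p}'_i)\le\mathrm{V}_i(\mathbf{p}_i)+c_i$.

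This last step is the main obstacle, and it genuinely uses supermodularity rather than only lattice-valuedness of $\mathrm{D}_i$: when $|\Psi\cap A_i|\geq2$ a suboptimal set can gain more than one unit passing from $\mathbf{p}_i$ to $\mathbf{p}'_i$, so the one-unit-gap argument that settled the case $s=1$ in Section~\ref{Sec_CC} breaks down. The route I would pursue is to note that $\mathrm{U}_i(\cdot,\mathbf{p}'_i)$ differs from the supermodular $\mathrm{U}_i(\cdot,\mathbf{p}_i)$ by a modular function and is therefore itself supermodular, to take a largest maximizer $\Psi^\dagger\in\mathrm{D}_i(\mathbf{p}'_i)$, and to use increasing marginal returns of $\mathrm{v}_i$, its integrality, and the complement/strong-demand facts above to modify $\Psi^\dagger$ --- adjoining $w^{l-1}$ for each $w^l\in\Psi^\dagger\cap A_i$ with $l\geq2$, adjoining $w^s$ when $i=i^{s+1}$, and deleting $w^1$ --- into a \emph{closed} set that still maximizes $\mathrm{U}_i(\cdot,\mathbf{p}'_i)$, whereupon the bound for closed sets applies. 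Verifying that none of these modifications lowers $\mathrm{U}_i(\cdot,\mathbf{p}'_i)$ (the marginal $\mathbf{p}'_i$-value of each adjoined $w^{l-1}$ and of $w^s$ being nonnegative, and that of $w^1$ nonpositive) is the delicate point, and is exactly where supermodularity and integrality are indispensable; an alternative I would consider is to process the chain one edge at a time, transferring a single unit across $w^l$ and tracking the demand at the intermediate prices, showing that this leaves $\mathcal{L}$ unchanged for $l<s$ and decreases it by one for $l=s$. Either way, combining the two bounds yields $\mathrm{V}_i(\mathbf{p}'_i)=\mathrm{V}_i(\mathbf{p}_i)+c_i$.
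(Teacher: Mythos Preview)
Your setup is clean and essentially matches the paper's: the sets $A_i,B_i$, the utility identity, the lower bound via a demand set of $i$ at $\mathbf{p}_i$, and the reduction of the upper bound to controlling $\mathrm{U}_i(\Psi,\mathbf{p}'_i)$ for arbitrary $\Psi$. Your notion of ``closed'' is exactly the paper's $f_i(\Psi)=\emptyset$, where $f_i(\Psi)=\{k:\,i^k=i,\ (k\le s\Rightarrow w^k\in\Psi),\ (k\ge2\Rightarrow w^{k-1}\notin\Psi)\}$. So up to the point where only the bound for non-demand sets remains, you and the paper agree.

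The gap is in that last step. Your first route---modify the largest $\mathbf{p}'$-maximizer $\Psi^\dagger$ into a closed set by adjoining the ``partner'' contracts and deleting $w^1$---does not obviously go through. Supermodularity gives \emph{increasing} marginal returns, so a lower bound on the $\mathbf{p}'$-marginal of $w^{l-1}$ is available only at supersets of $\underline{\mathrm{d}}^{\,w^l}_i(\mathbf{p}_i)\setminus\{w^{l-1}\}$; there is no reason $\Psi^\dagger$ contains that set, even though $w^l\in\Psi^\dagger$. Likewise, the complement property yields an upper bound on the marginal of $w^1$ only at subsets of $\overline{\mathrm{d}}^{\,1}_i(\mathbf{p}_i)$, which $\Psi^\dagger$ need not be. So the ``marginal nonnegative/nonpositive'' claims you flag as delicate are not available from the hypotheses in the direction you need. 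Your edge-by-edge alternative runs into a different obstacle: at the original $\mathbf{p}$, for $2\le k\le s$ the contract $w^k$ \emph{is} demanded by $i^k$ (it lies in $\overline{\mathrm{d}}^{\,k-1}_{i^k}(\mathbf{p}_{i^k})$), while $w^k$ need not be strongly demanded by $i^{k+1}$, so single-edge transfers can move $\mathcal{L}$ in either direction and the telescoping you want does not materialize without additional control at the intermediate price vectors.

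What the paper does instead is to bound $\mathrm{U}_i(\Psi,\mathbf{p}'_i)$ for \emph{every} $\Psi$ by induction on the defect count $|f_i(\Psi)|$. The base cases $|f_i(\Psi)|\le1$ are handled by the one-unit-gap argument (your closed-set inequality plus integrality). For $|f_i(\Psi)|\ge2$, the key lemma picks two defect positions $k>l$ and sets $\Phi=\overline{\mathrm{d}}^{\,k-1}_i(\mathbf{p}_i)$; this particular demand set contains exactly the chain contracts at levels $\ge k-1$, so taking $\Psi\cup\Phi$ kills the defect at $k$ while $\Psi\cap\Phi$ kills the defect at $l$, and $f_i(\Psi\cup\Phi),f_i(\Psi\cap\Phi)\subseteq f_i(\Psi)$. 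Then the supermodularity inequality
\[
\mathrm{U}_i(\Psi,\mathbf{p}'_i)+\mathrm{U}_i(\Phi,\mathbf{p}'_i)\ \le\ \mathrm{U}_i(\Psi\cup\Phi,\mathbf{p}'_i)+\mathrm{U}_i(\Psi\cap\Phi,\mathbf{p}'_i),
\]
together with $\mathrm{U}_i(\Phi,\mathbf{p}'_i)=\mathrm{V}_i(\mathbf{p}_i)+c_i$ (since $\Phi$ is closed and in $\mathrm{D}_i(\mathbf{p}_i)$) and the inductive hypothesis on the right, closes the induction. The missing idea in your sketch is precisely this use of the level-indexed demand sets $\overline{\mathrm{d}}^{\,k-1}_i(\mathbf{p}_i)$ to split defects via the supermodular inequality, rather than trying to repair a $\mathbf{p}'$-optimal set element by element.
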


The adjustment (\ref{newprice}) means that for each primitive contract in the complements chain, the price for its left-hand-side agent increases by one unit, and the price for its right-hand-side agent decreases by one unit. If an agent is not the last agent in the chain, her indirect utilities are the same at $\mathbf{p}$ and $\mathbf{p}'$. For the last agent in the chain, her indirect utility decreases by one unit as the price vector changes. This lemma implies that, when we find a complements chain, we can adjust the price vector according to (\ref{newprice}) so as to decrease the value of the Lyapunov function by one unit.

If an agent $i\in\{i^1,\ldots,i^s\}\setminus\{i^{s+1}\}$ appears in the complements chain and is not the last agent of the chain, then for any $\Psi\in \mathrm{D}_i(\mathbf{p}_i)$ that contains some of her right-hand-side primitive contracts in the chain, these primitive contracts cost less at the new price vector $\mathbf{p}'$. But according to the condition (iii) of Definition \ref{def_chain}, her left-hand-side primitive contracts that goes right before these right-hand-side primitive contracts must also be in $\Psi$. Hence, any of her demand sets at $\mathbf{p}$ does not cost less at the new price vector $\mathbf{p}'$ and thus  does not yield a higher utility at $\mathbf{p}'$: $\mathrm{U}_i(\Psi,\mathbf{p}'_i)\leq \mathrm{V}_i(\mathbf{p}_i)$ for any $\Psi\in \mathrm{D}_i(\mathbf{p}_i)$. For the last agent $i^{s+1}$ of the complements chain, we instead have $\mathrm{U}_{i^{s+1}}(\Psi,\mathbf{p}'_{i^{s+1}})\leq \mathrm{V}_{i^{s+1}}(\mathbf{p}_{i^{s+1}})-1$ for any $\Psi\in \mathrm{D}_{i^{s+1}}(\mathbf{p}_{i^{s+1}})$ since the last primitive contract is strongly demanded by $i^{s+1}$ and costs one unit more at the new price vector. Moreover, the largest demand set $\overline{\mathrm{d}}^1_i(\mathbf{p}_i)$ of agent $i\in\{i^1,\ldots,i^s\}\setminus\{i^{s+1}\}$ at $\mathbf{p}$ has the same cost as at $\mathbf{p}'$; and agent $i^{s+1}$'s largest demand set $\overline{\mathrm{d}}^1_{i^{s+1}}(\mathbf{p}_{i^{s+1}})$ at $\mathbf{p}$ cost one unit more at $\mathbf{p}'$ than at $\mathbf{p}$.\footnote{This is because all primitive contracts of the complements chain involving agent $i\in I$ are in $\overline{\mathrm{d}}^1_i(\mathbf{p}_i)$ except that the primitive contract $w^1$ is not in $\overline{\mathrm{d}}^1_i(\mathbf{p}_i)$ if $i=i^1$. Then, this statement follows from the condition (iii) of Definition \ref{def_chain}.\label{ft_ds}} Therefore, we have $\mathrm{U}_i(\overline{\mathrm{d}}^1_i(\mathbf{p}_i),\mathbf{p}'_i)=\mathrm{V}_i(\mathbf{p}_i)$ for all $i\in\{i^1,\ldots,i^s\}\setminus\{i^{s+1}\}$, and $\mathrm{U}_{i^{s+1}}(\overline{\mathrm{d}}^1_{i^{s+1}}(\mathbf{p}),\mathbf{p}'_{i^{s+1}})=\mathrm{V}_{i^{s+1}}(\mathbf{p}_{i^{s+1}})-1$.

Now, we can prove Lemma \ref{lma_adjust} by showing that for any agent $i\in\{i^1,\ldots,i^s\}\setminus\{i^{s+1}\}$, any of her suboptimal choices $\Psi\notin \mathrm{D}_i(\mathbf{p}_i)$ at $\mathbf{p}$ does not yield a utility higher than $\mathrm{V}_i(\mathbf{p}_i)$ at the new price vector $\mathbf{p}'$: $\mathrm{U}_i(\Psi,\mathbf{p}'_i)\leq \mathrm{V}_i(\mathbf{p}_i)$ for any $\Psi\in 2^{\Omega_i}\setminus \mathrm{D}_i(\mathbf{p}_i)$; and for the last agent $i^{s+1}$ of the complements chain, any of her suboptimal choices $\Psi\notin \mathrm{D}_{i^{s+1}}(\mathbf{p}_{i^{s+1}})$ at $\mathbf{p}$ does not yield a utility higher than $\mathrm{V}_{i^{s+1}}(\mathbf{p}_{i^{s+1}})-1$ at the new price vector $\mathbf{p}'$: $\mathrm{U}_{i^{s+1}}(\Psi,\mathbf{p}'_{i^{s+1}})\leq \mathrm{V}_{i^{s+1}}(\mathbf{p}_{i^{s+1}})-1$ for any $\Psi\in 2^{\Omega_{i^{s+1}}}\setminus \mathrm{D}_{i^{s+1}}(\mathbf{p}_{i^{s+1}})$. We prove this part in the Appendix.

\subsection{The formal procedure of the auction}

At a non-equilibrium price vector $\mathbf{p}$, there is at least one complements chain. The auctioneer can identify a complements chain (using the method described below Proposition \ref{prop_chain}) and adjust the price vector by applying (\ref{newprice}). According to Lemma \ref{lma_adjust}, this would decrease the value of the Lyapunov function $\mathcal{L}(\mathbf{p})$ by one unit. However, the auctioneer may find several complements chains with \textbf{disjoint sets of agents}: The auctioneer can identify the first complements chain, and then try to construct a second one that contains no agent from the first one;\footnote{The auctioneer constructs the second complements chain using the same method as she constructs the first one as long as she does not select agents that appear in the first complements chain.} the auctioneer continues to identify complements chains with disjoint sets of agents until she finds it impossible to construct a new complements chain that contains no agent from previous ones. Then the auctioneer adjusts the price vector $\mathbf{p}$ by applying (\ref{newprice}) for each complements chain. This would decrease the value of the Lyapunov function $\mathcal{L}(\mathbf{p})$ by the number of the complements chains the auctioneer identifies. 

We are now ready to provide a formal description of a dynamic auction for multilateral collaboration with transferable utilities and gross-complement primitive contracts. We can start the auction with the price vector $\mathbf{0}$, which means that agents do not pay or receive money in all primitive contracts. Notice that we can also start with any other balanced integer price vector $\mathbf{p}\in \mathbb{B}\cap\mathbb{Z}^n$.

\bigskip

\noindent \textbf{A dynamic auction for multilateral collaboration}

Step 0: The auctioneer announces an initial integer balanced price vecotr $\mathbf{p}(1)\in \mathbb{B}\cap\mathbb{Z}^n$. 

Step $s, s\geq1$: Each agent reports all her demand sets from $\mathrm{D}_i(\mathbf{p}(s))$. The auctioneer uses the procedure in Section \ref{Sec_verify} to verify whether $\mathbf{p}(s)$ is an equilibrium price vector. If $\mathbf{p}(s)$ is an equilibrium price vector, the procedure in Section \ref{Sec_verify} outputs a stable outcome, and the auction terminates. If $\mathbf{p}(s)$ is a non-equilibrium price vector, the auctioneer finds complements chains of disjoint sets of agents, adjusts $\mathbf{p}(s)$ into a new price vector $\mathbf{p}(s+1)$ by applying (\ref{newprice}) for each complements chain, and goes to the next step.

\begin{theorem}
\normalfont
If agents' valuations take integer values, primitive contracts are gross complements for all agents, and agents report truthfully, then the dynamic auction for multilateral collaboration converges to a stable outcome in a finite number of steps.
\end{theorem}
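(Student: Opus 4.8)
The plan is to treat the Lyapunov function $\mathcal{L}$ as a potential: I will show that every round carried out at a non-equilibrium price vector strictly decreases $\mathcal{L}$ by a positive integer, while $\mathcal{L}$ stays integer-valued and bounded below along the run, so the auction cannot run forever and must halt at an equilibrium price vector, where the verification procedure of Section~\ref{Sec_verify} outputs a stable outcome. The first ingredient is the invariant that every price vector the auction produces lies in $\mathbb{B}\cap\mathbb{Z}^{K}$: this holds for $\mathbf{p}(1)$ by construction, and it is preserved by the update~(\ref{newprice}), since for each complements chain used in a round and each contract $w^l$ in it only the coordinates $p^{w^l}_{i^l}$ and $p^{w^l}_{i^{l+1}}$ move, by $-1$ and $+1$ respectively, so integrality is kept and $\sum_{i\in\mathrm{N}(w^l)}p^{w^l}_i$ is unchanged.

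Next I would handle the termination clause. If $\mathbf{p}(s)$ is an equilibrium price vector, then by the correctness of the verification procedure (proved in the Appendix) that procedure halts at some Step~$k$ with $\mathrm{G}^k(\mathbf{p}(s))=\emptyset$ and returns $\Phi=\bigcup_{i\in I}\overline{\mathrm{d}}^k_i(\mathbf{p}(s)_i)$, which is the largest efficient set of primitive contracts and forms a competitive equilibrium with $\mathbf{p}(s)$; by Lemma~\ref{lma_RY}(ii) the outcome it outputs is stable. Hence it suffices to show the auction cannot keep running at non-equilibrium price vectors.

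For the progress step, suppose $\mathbf{p}(s)$ is a non-equilibrium price vector. By Proposition~\ref{prop_chain} there is at least one complements chain at $\mathbf{p}(s)$, so the auctioneer obtains $c\ge1$ complements chains with pairwise disjoint agent sets and applies~(\ref{newprice}) to each. The key point is that these simultaneous updates combine additively in $\mathcal{L}$: because the agent sets are disjoint, each agent $i$ lies in at most one chain, so the coordinates of $\mathbf{p}(s)$ in agent $i$'s slice $\mathbf{p}(s)_i$ are modified only by the chain (if any) containing $i$, and in exactly the way prescribed by the single-chain update of Lemma~\ref{lma_adjust}; since $\mathrm{V}_i$ depends only on $\mathbf{p}_i$, that lemma applies chain by chain and gives $\mathrm{V}_i(\mathbf{p}(s+1)_i)=\mathrm{V}_i(\mathbf{p}(s)_i)$ for every agent except the $c$ distinct ``last agents'' of the chains, each of whom loses exactly one unit of indirect utility, while agents in no chain are untouched. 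Therefore $\mathcal{L}(\mathbf{p}(s+1))=\mathcal{L}(\mathbf{p}(s))-c\le\mathcal{L}(\mathbf{p}(s))-1$.

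It remains to bound $\mathcal{L}$. Since valuations are integer-valued and, by the invariant, prices are integral, each $\mathrm{V}_i(\mathbf{p}(s)_i)=\max_{\Psi\subseteq\Omega_i}\{\mathrm{v}_i(\Psi)-\sum_{w\in\Psi}p(s)^w_i\}$ is an integer, so $\mathcal{L}(\mathbf{p}(s))\in\mathbb{Z}$; and taking $\Psi=\emptyset$ shows $\mathcal{L}(\mathbf{p})\ge\sum_{i\in I}\mathrm{v}_i(\emptyset)$ for all $\mathbf{p}\in\mathbb{B}$, a constant. If the auction did not halt, it would execute rounds at non-equilibrium price vectors indefinitely, so by the progress step $(\mathcal{L}(\mathbf{p}(s)))_s$ would be a strictly decreasing sequence of integers bounded below, which is impossible; hence the auction reaches an equilibrium price vector in finitely many rounds and halts there with a stable outcome. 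The part I expect to be the real obstacle is the progress step, in two aspects: that the chains the auctioneer builds while avoiding agents from earlier chains of the same round are genuine complements chains in the sense of Definition~\ref{def_chain}, and that firing~(\ref{newprice}) along all of them at once lowers $\mathcal{L}$ by exactly $c$. The first holds because the backward construction below Proposition~\ref{prop_chain} only makes free choices of agents, so forbidding a fixed finite set either still completes to a valid chain or else the auctioneer simply stops adding chains (and at least the first chain always exists); the second is the additivity just described, which rests on the disjointness of the agent sets confining each chain's coordinate changes to a separate slice.
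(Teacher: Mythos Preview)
Your proof is correct and follows essentially the same Lyapunov-potential argument as the paper: show that each round at a non-equilibrium price vector lowers $\mathcal{L}$ by at least one (via Lemma~\ref{lma_adjust} applied to agent-disjoint complements chains), observe that $\mathcal{L}$ is integer-valued and bounded below, and conclude finite termination at an equilibrium price vector where the verification procedure yields a stable outcome. You simply make explicit several details the paper leaves implicit---the integrality and balancedness invariants of the price sequence, the additivity across agent-disjoint chains, and the uniform lower bound $\mathcal{L}(\mathbf{p})\ge\sum_{i}\mathrm{v}_i(\emptyset)$---none of which changes the structure of the argument.
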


Recall that the valuation $\mathrm{v}_i(\Psi)$ is finite for any agent $i\in I$ and any set of primitive contracts $\Psi\subseteq\Omega_i$. We thus know that the indirect utility $\mathrm{V}_i(\Psi)$ and the value of the Lyapunov function $\mathcal{L}(\mathbf{p})$ are finite for any agent $i\in I$ and any balanced price vector $\mathbf{p}\in \mathbb{B}$. Consequently, the auction converges to a stable outcome in a finite number of steps as the value of the Lyapunov function $\mathcal{L}(\mathbf{p})$ is reduced by at least one unit in each step of the auction. 

RY showed gross complementarity in a market of patent licensing,\footnote{See Example 3 of RY.} making our auction a suitable tool for finding a stable outcome in this market. In an economy with income effects, \cite{BEJK23} demonstrated that monotone auctions may fail to find an equilibrium when gross substitutability does not hold. Although ascending-price auctions have been successfully applied to allocate spectrum licenses in real-life practice, designers often prefer combinatorial auctions when complementarities are strong. However, our work shows that, when allowing prices to adjust in both upward and downward directions, dynamic auctions are applicable to environments in which complementarities are a key feature. Our analysis may provide insights into the design of dynamic auctions for environments with other forms of complementarities.

\bigskip

\bigskip

\section{Appendix}

\subsection{Proofs of the lemmata}\label{Sec_proof}

\begin{proof}[Proof of Lemma \ref{lma_effi}]
(i) Let $\Psi\subseteq\Omega$ be an arbitrary set of primitive contracts. If $(\Phi,\mathbf{p})$ is a competitive equilibrium, then $\mathrm{U}_i(\Phi_i,\mathbf{p}_i)\geq \mathrm{U}_i(\Psi_i,\mathbf{p}_i)$ for each $i\in I$, and thus, $\sum_{i\in I}\mathrm{v}_i(\Phi_i)=\sum_{i\in I}\mathrm{U}_i(\Phi_i,\mathbf{p}_i)\geq\sum_{i\in I}\mathrm{U}_i(\Psi_i,\mathbf{p}_i)=\sum_{i\in I}\mathrm{v}_i(\Psi_i)$, where the two equalities hold since $\mathbf{p}$ is balanced. Hence, $\Phi$ is efficient.

(ii) Suppose $(\Phi',\mathbf{p})$ is a competitive equilibrium, and let $\Phi\subseteq\Omega$ be an arbitrary efficient set of primitive contracts; then, since $\mathrm{U}_i(\Phi'_i,\mathbf{p}_i)\geq \mathrm{U}_i(\Phi_i,\mathbf{p}_i)$ for each $i\in I$, we have $\sum_{i\in I}\mathrm{v}_i(\Phi_i)\geq\sum_{i\in I}\mathrm{v}_i(\Phi'_i)=\sum_{i\in I}\mathrm{U}_i(\Phi'_i,\mathbf{p}_i)\geq \sum_{i\in I}\mathrm{U}_i(\Phi_i,\mathbf{p}_i)=\sum_{i\in I}\mathrm{v}_i(\Phi_i)$, where the first inequality holds since $\Phi$ is efficient, the two equalities hold since $\mathbf{p}$ is balanced, and the second inequality holds since $(\Phi',\mathbf{p})$ is a competitive equilibrium. Therefore, $\mathrm{U}_i(\Phi'_i,\mathbf{p}_i)=\mathrm{U}_i(\Phi_i,\mathbf{p}_i)$ for each $i\in I$, and thus, $(\Phi,\mathbf{p})$ is also a competitive equilibrium. 
\end{proof}

\bigskip

\begin{proof}[Proof of Lemma \ref{lma_chara}]
The ``only if'' part. Suppose $\mathbf{p}'\in \mathbb{B}$ is an equilibrium price vector, and $\Phi^*$ an aribitrary efficient set of primitive contracts. By Lemma \ref{lma_effi}, $(\Phi^*,\mathbf{p}')$ is a competitive equilibrium, and thus, $\mathcal{L}(\mathbf{p}')=\sum_{i\in I}\mathrm{V}_i(\mathbf{p}'_i)=\sum_{i\in I}\mathrm{U}_i(\Phi^*_i,\mathbf{p}'_i)=\sum_{i\in I}\mathrm{v}_i(\Phi^*_i)=\max_{\Phi\subseteq\Omega}\sum_{i\in I}\mathrm{v}_i(\Phi_i)$.\footnote{The second equality holds since $(\Phi^*,\mathbf{p}')$ is a competitive equilibrium; and the third equality holds since $\mathbf{p}'$ is balanced.} For any $\mathbf{p}\in \mathbb{B}$, we have $\mathrm{V}_i(\mathbf{p}_i)\geq \mathrm{v}_i(\Phi^*_i)-\sum_{w\in\Phi^*_i}p_i^w$ for each $i\in I$, and thus, $\mathcal{L}(\mathbf{p})=\sum_{i\in I}\mathrm{V}_i(\mathbf{p}_i)\geq\sum_{i\in I}\mathrm{v}_i(\Phi^*_i)=\mathcal{L}(\mathbf{p}')$, which indicates $\mathbf{p}'\in\arg\min_{\mathbf{p}\in \mathbb{B}}\mathcal{L}(\mathbf{p})$.

The ``if'' part. Suppose $\mathbf{p}'\in\arg\min_{\mathbf{p}\in \mathbb{B}}\mathcal{L}(\mathbf{p})$ and $\mathcal{L}(\mathbf{p}')=\max_{\Phi\subseteq\Omega}\sum_{i\in I}\mathrm{v}_i(\Phi_i)$. Let $\Phi^*$ be an arbitrary efficient set of primitive contracts. We have $\mathrm{V}_i(\mathbf{p}'_i)\geq \mathrm{v}_i(\Phi^*_i)-\sum_{w\in\Phi^*_i}p_i^{'w}$ for each $i\in I$. Suppose $\mathrm{V}_j(\mathbf{p}'_i)> \mathrm{v}_j(\Phi^*_j)-\sum_{w\in\Phi^*_j}p_j^{'w}$ for some $j\in I$, adding the inequalities for all agents we have $\mathcal{L}(\mathbf{p}')>\sum_{i\in I}\mathrm{v}_i(\Phi^*_i)$, which contradicts $\mathcal{L}(\mathbf{p}')=\max_{\Phi\subseteq\Omega}\sum_{i\in I}\mathrm{v}_i(\Phi_i)$. Therefore, $\mathrm{V}_i(\mathbf{p}'_i)=\mathrm{v}_i(\Phi^*_i)-\sum_{w\in\Phi^*_i}p_i^{'w}$ for each $i\in I$, and thus $(\Phi^*,\mathbf{p}')$ is a competitive equilibrium.
\end{proof}

\bigskip

\begin{proof}[Proof of Lemma \ref{lma_adjust}]
According to the discussion below Lemma \ref{lma_adjust}, it remains to show that (i) for any agent $i\in\{i^1,\ldots,i^s\}\setminus\{i^{s+1}\}$, any of her suboptimal choices $\Psi\notin \mathrm{D}_i(\mathbf{p}_i)$ at $\mathbf{p}$ does not yield a utility higher than $\mathrm{V}_i(\mathbf{p}_i)$ at the new price vector $\mathbf{p}'$: $\mathrm{U}_i(\Psi,\mathbf{p}'_i)\leq \mathrm{V}_i(\mathbf{p}_i)$ for any $\Psi\in 2^{\Omega_i}\setminus \mathrm{D}_i(\mathbf{p}_i)$, and (ii) for the last agent $i^{s+1}$ of the complements chain, any of her suboptimal choices $\Psi\notin \mathrm{D}_{i^{s+1}}(\mathbf{p}_{i^{s+1}})$ at $\mathbf{p}$ does not yield a utility higher than $\mathrm{V}_{i^{s+1}}(\mathbf{p}_{i^{s+1}})-1$ at the new price vector $\mathbf{p}'$: $\mathrm{U}_{i^{s+1}}(\Psi,\mathbf{p}'_{i^{s+1}})\leq \mathrm{V}_{i^{s+1}}(\mathbf{p}_{i^{s+1}})-1$ for any $\Psi\in 2^{\Omega_{i^{s+1}}}\setminus \mathrm{D}_{i^{s+1}}(\mathbf{p}_{i^{s+1}})$. 

Let $E=\{1,2,\cdots,s+1\}$ be the set of agents' superscripts in the complements chain. For each agent $i$ involved in the complements chain and each set of primitive contracts $\Psi\subseteq\Omega_i$ signed by agent $i$, let $f_i(\Psi)$ be the subset of $E$ such that $k\in f_i(\Psi)$ if the following three conditions hold: (i) $i^k=i$, (ii) $w^k\in \Psi$ if $k\leq s$, and (iii) $w^{k-1}\notin\Psi$ if $k\geq 2$. For each agent $i$, we have $f_i(\Psi)=\emptyset$ if $\Psi\in \mathrm{D}_i(\mathbf{p}_i)$ according to Condition (iii) of Definition \ref{def_chain} and the following two facts: (i) for agent $i^1$, $w^1\notin\Psi$ for any $\Psi\in \mathrm{D}_{i^1}(\mathbf{p}_{i^1})$; and (ii) for agent $i^{s+1}$, $w^s\in\Psi$ for all $\Psi\in \mathrm{D}_{i^{s+1}}(\mathbf{p}_{i^{s+1}})$. 

Hence, $|f_i(\Psi)|=1$ implies $\Psi\notin \mathrm{D}_i(\mathbf{p}_i)$, which further implies $\mathrm{U}_i(\Psi,\mathbf{p}_i)\leq \mathrm{V}_i(\mathbf{p}_i)-1$. For each agent $i\in\{i^1,\ldots,i^s\}\setminus\{i^{s+1}\}$, $|f_i(\Psi)|=1$ implies that the cost for the primitive contracts from $\Psi$ at $\mathbf{p}'$ is no less than the cost at $\mathbf{p}$ minus one.\footnote{If $f_i(\Psi)=\{k\}$ where $k\neq s+1$, by the definition of $f_i(\Psi)$ we have $i=i^k$. Notice that (i) the price of $w^k\in\Psi$ for agent $i$ is one unit lower at $\mathbf{p}'$ than at $\mathbf{p}$, (ii) $w^{k-1}$ is either not in $\Psi$ or non-existent (when $k=1$), and (iii) agent $i$ may be in other positions of the complements chain, and $\Psi$ may contain some of agent $i$'s left-hand-side primitive contracts of these positions. Therefore, the cost of $\Psi$ at $\mathbf{p}'$ is no less than its cost at $\mathbf{p}$ minus one.\label{ft_proof}} Thus, we have $\mathrm{U}_i(\Psi,\mathbf{p}'_i)\leq\mathrm{U}_i(\Psi,\mathbf{p}_i)+1\leq \mathrm{V}_i(\mathbf{p}_i)$ when $|f_i(\Psi)|=1$. 
Suppose $|f_{i^{s+1}}(\Psi)|=1$ for the last agent $i^{s+1}$ in the complements chain. There are two possible cases: (i) If $f_{i^{s+1}}(\Psi)\neq\{s+1\}$, then $w^s\in\Psi$; and thus, the cost for the primitive contracts from $\Psi\setminus\{w^s\}$ is at most one unit cheaper than at $\mathbf{p}$,\footnote{This statement follows the reason in footnote \ref{ft_proof}.} but the primitive contract $w^s$ costs one more unit; consequently, the cost for the primitive contracts from $\Psi$ at $\mathbf{p}'$ is no less than at $\mathbf{p}$; (ii) If $f_{i^{s+1}}(\Psi)=\{s+1\}$, then $w^s\notin\Psi$; and thus, the cost for the primitive contracts from $\Psi$ at $\mathbf{p}'$ is no less than at $\mathbf{p}$.\footnote{This is because agent $i^{s+1}$ may be in other positions of the complements chain, and $\Psi$ may contain some of agent $i^{s+1}$'s left-hand-side primitive contracts of these positions.} In both cases, we have $\mathrm{U}_{i^{s+1}}(\Psi,\mathbf{p}'_{i^{s+1}})\leq\mathrm{U}_{i^{s+1}}(\Psi,\mathbf{p}_{i^{s+1}})\leq \mathrm{V}_{i^{s+1}}(\mathbf{p}_{i^{s+1}})-1$. We have the following lemma if $|f_i(\Psi)|\geq2$ for any agent $i$ of the complements chain and any subset $\Psi\subseteq\Omega_i$.

\begin{lemma}\label{lma_phi}
\normalfont
For any agent $i\in\{i^1,\ldots,i^{s+1}\}$ of the complements chain and any subset $\Psi\subseteq\Omega_i$, if $|f_i(\Psi)|\geq2$, there exists $\Phi\in \mathrm{D}_i(\mathbf{p}_i)$ such that $|f_i(\Psi\cup \Phi)|,|f_i(\Psi\cap \Phi)|\leq|f_i(\Psi)|-1$ and
\begin{description}
  \item[(i)] $\mathrm{U}_i(\Phi,\mathbf{p}'_i)=\mathrm{V}_i(\mathbf{p}_i)$ if $i\in\{i^1,\ldots,i^s\}\setminus\{i^{s+1}\}$,
  \item[(ii)] $\mathrm{U}_i(\Phi,\mathbf{p}'_i)=\mathrm{V}_i(\mathbf{p}_i)-1$ if $i=i^{s+1}$.
\end{description}
\end{lemma}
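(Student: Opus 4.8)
The plan is to prove Lemma~\ref{lma_phi} by exhibiting the required $\Phi$ explicitly: set $k^{**}\equiv\min f_i(\Psi)$ and take $\Phi\equiv\overline{\mathrm{d}}^{k^{**}}_i(\mathbf{p}_i)$, agent $i$'s largest demand set confined in $\Omega^{k^{**}}(\mathbf{p})$, which lies in $\mathrm{D}_i(\mathbf{p}_i)$ since $\mathrm{D}_i(\mathbf{p}_i)$ is closed under unions. (Because $|f_i(\Psi)|\geq2$ we have $k^{**}\leq s$, and when $k^{**}=1$ this is just $\overline{\mathrm{d}}^1_i(\mathbf{p}_i)$.) Before that I would record two facts valid for an arbitrary $\Phi\in \mathrm{D}_i(\mathbf{p}_i)$. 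First, a \emph{monotonicity} fact: because $f_i(\Phi)=\emptyset$, a position-by-position check gives $f_i(\Psi\cup \Phi)\subseteq f_i(\Psi)$ and $f_i(\Psi\cap \Phi)\subseteq f_i(\Psi)$; the only cases that are not immediate use $w^1\notin \Phi$ when $i=i^1$ and $w^s\in \Phi$ when $i=i^{s+1}$, which hold by Conditions (i)--(ii) of Definition~\ref{def_chain}. Second, a \emph{removal criterion}: a position $k\in f_i(\Psi)$ leaves $f_i(\Psi\cap \Phi)$ automatically when $k=1$, leaves $f_i(\Psi\cup \Phi)$ automatically when $k=s+1$, and for $2\leq k\leq s$ leaves $f_i(\Psi\cup \Phi)$ iff $w^{k-1}\in \Phi$ and leaves $f_i(\Psi\cap \Phi)$ iff $w^k\notin \Phi$.

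I would then use the removal criterion to get the two size bounds for $\Phi=\overline{\mathrm{d}}^{k^{**}}_i(\mathbf{p}_i)$. On the intersection side: if $k^{**}=1$, position $1$ leaves automatically; if $k^{**}\geq2$, then $i=i^{k^{**}}$ and $w^{k^{**}}\notin\overline{\mathrm{d}}^{k^{**}}_{i^{k^{**}}}(\mathbf{p}_{i^{k^{**}}})=\Phi$ by Condition (i), so position $k^{**}$ leaves. On the union side: since $|f_i(\Psi)|\geq2$, pick some $k^*\in f_i(\Psi)$ with $k^*>k^{**}$; if $k^*=s+1$ position $s+1$ leaves automatically, and if $k^*\leq s$ then $i=i^{k^*}$ sits to the right of $w^{k^*-1}$, so Condition (i) gives $w^{k^*-1}\in\overline{\mathrm{d}}^{k^*-1}_i(\mathbf{p}_i)$, and as $k^{**}\leq k^*-1$ we have $\overline{\mathrm{d}}^{k^*-1}_i(\mathbf{p}_i)\subseteq\overline{\mathrm{d}}^{k^{**}}_i(\mathbf{p}_i)=\Phi$, so position $k^*$ leaves. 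Combined with the monotonicity fact this yields $f_i(\Psi\cup \Phi)\subsetneq f_i(\Psi)$ and $f_i(\Psi\cap \Phi)\subsetneq f_i(\Psi)$, i.e.\ $|f_i(\Psi\cup \Phi)|,|f_i(\Psi\cap \Phi)|\leq|f_i(\Psi)|-1$.

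It remains to compute $\mathrm{U}_i(\Phi,\mathbf{p}'_i)$. Since $\Phi\in \mathrm{D}_i(\mathbf{p}_i)$ we have $\mathrm{U}_i(\Phi,\mathbf{p}_i)=\mathrm{V}_i(\mathbf{p}_i)$, so the claim reduces to showing that the cost of $\Phi$ rises by exactly $0$ when $i\neq i^{s+1}$ and by exactly $1$ when $i=i^{s+1}$ as the price moves from $\mathbf{p}$ to $\mathbf{p}'$. That cost increase equals $\sum_{k:\,i^k=i}\big(\mathbf{1}[k\geq2,\ w^{k-1}\in \Phi]-\mathbf{1}[k\leq s,\ w^k\in \Phi]\big)$, and I would show every summand is $0$ except the $k=s+1$ term, which is $1$ and occurs precisely when $i=i^{s+1}$. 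For a position $k$ of $i$ with $k\leq s$: if $k<k^{**}$ both $w^{k-1}$ and $w^k$ were removed before step $k^{**}$ and so lie outside $\Omega^{k^{**}}(\mathbf{p})\supseteq \Phi$; if $k=k^{**}$ then $w^{k^{**}-1}\notin\Omega^{k^{**}}(\mathbf{p})$ and $w^{k^{**}}\notin\overline{\mathrm{d}}^{k^{**}}_i(\mathbf{p}_i)=\Phi$; and if $k>k^{**}$ then $w^{k-1}\in\overline{\mathrm{d}}^{k-1}_i(\mathbf{p}_i)\subseteq\Phi$ by Condition (i) (using $k^{**}\leq k-1$), while also $w^k\in\overline{\mathrm{d}}^{k-1}_i(\mathbf{p}_i)\subseteq\Phi$ because $w^k$ survives into $\Omega^k(\mathbf{p})$, hence is not partially agreeable at level $k-1$, and since it already lies in $\overline{\mathrm{d}}^{k-1}_{i^{k+1}}(\mathbf{p}_{i^{k+1}})$ it must then lie in \emph{every} agent's largest demand set confined in $\Omega^{k-1}(\mathbf{p})$, in particular in that of $i=i^k$. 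Finally, when $i=i^{s+1}$ the term $k=s+1$ contributes $\mathbf{1}[w^s\in \Phi]=1$ since $w^s$ sits to the right of $i$, so $w^s\in\overline{\mathrm{d}}^s_i(\mathbf{p}_i)\subseteq\Phi$. This gives the stated values of $\mathrm{U}_i(\Phi,\mathbf{p}'_i)$ and completes the proof.

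I expect the case $k>k^{**}$ of the cost computation to be the main obstacle: one must deduce $w^k\in\overline{\mathrm{d}}^{k-1}_i(\mathbf{p}_i)$ even though $i=i^k$ is exactly the agent who eventually causes $w^k$ to be dropped, and this is where the level structure of the verification procedure — a chain contract that has not yet been removed is demanded by every one of its participants until the step at which it becomes partially agreeable — is indispensable, and is what singles out $\overline{\mathrm{d}}^{k^{**}}_i(\mathbf{p}_i)$ as the right choice of $\Phi$. The rest is bookkeeping; and with Lemma~\ref{lma_phi} in hand the remaining half of Lemma~\ref{lma_adjust} follows by induction on $|f_i(\Psi)|$, bounding $\mathrm{U}_i(\Psi,\mathbf{p}'_i)$ via supermodularity of $\mathrm{v}_i$ by $\mathrm{U}_i(\Psi\cup\Phi,\mathbf{p}'_i)+\mathrm{U}_i(\Psi\cap\Phi,\mathbf{p}'_i)-\mathrm{U}_i(\Phi,\mathbf{p}'_i)$.
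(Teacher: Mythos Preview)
Your proof is correct and follows essentially the same approach as the paper: both take $\Phi=\overline{\mathrm{d}}^{m}_i(\mathbf{p}_i)$ for a level $m$ determined by $f_i(\Psi)$, verify the monotonicity $f_i(\Psi\cup\Phi),f_i(\Psi\cap\Phi)\subseteq f_i(\Psi)$ from $f_i(\Phi)=\emptyset$, exhibit one index lost on each side, and compute the cost change by checking which chain contracts lie in $\overline{\mathrm{d}}^{m}_i(\mathbf{p}_i)$. The only difference is cosmetic: the paper picks any $k>l$ in $f_i(\Psi)$ and sets $m=k-1$, showing $k$ leaves the union and $l$ leaves the intersection, whereas you set $m=k^{**}=\min f_i(\Psi)$ and show $k^{**}$ leaves the intersection while any larger $k^*\in f_i(\Psi)$ leaves the union; both choices work for the same structural reason and your added justification that $w^k\in\overline{\mathrm{d}}^{k-1}_i(\mathbf{p}_i)$ (via $w^k\notin G^{k-1}(\mathbf{p})$) makes the cost computation slightly more explicit than the paper's footnote.
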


\begin{proof}
Suppose $k,l\in f_i(\Psi)$ and $k>l$. We have $k-l\geq2$ since two neighbour agents in a complements chain are distinct. Let $\Phi=\overline{\mathrm{d}}^{k-1}_i(\mathbf{p}_i)$, we know $\mathrm{U}_i(\Phi,\mathbf{p}'_i)=\mathrm{V}_i(\mathbf{p}_i)$ if $i\in\{i^1,\ldots,i^s\}\setminus\{i^{s+1}\}$, and $\mathrm{U}_i(\Phi,\mathbf{p}'_i)=\mathrm{V}_i(\mathbf{p}_i)-1$ if $i=i^{s+1}$.\footnote{This is because $\overline{\mathrm{d}}^{k-1}_i(\mathbf{p})$ contains all agent $i$'s left-hand-side and right-hand-side primitive contracts that appear after the $(k-1)$th agent (which is not agent $i$) in complements chain, while all agent $i$'s left-hand-side and right-hand-side primitive contracts that appear before the $(k-1)$th agent in complements chain are not in $\overline{\mathrm{d}}^{k-1}_i(\mathbf{p}_i)$.}

We have $w^{k-1}\in\Phi=\overline{\mathrm{d}}^{k-1}_i(\mathbf{p}_i)$ due to Condition (i) of Definition \ref{def_chain} and $i=i^k$. If $k\leq s$, we also know $w^k\in\Phi$ since $w^k$ has not been removed before Step $k$ of the procedure for verifying whether $\mathbf{p}$ is an equilibrium price vector. The definition of $f_i(\Psi)$ indicates $w^{k-1}\notin\Psi$, and $w^{l-1}\notin\Psi$ if $l\geq2$. We know $w^l\notin\Phi$ since $w^l$ has been removed before Step $k-1$ of the procedure. If $l\geq 2$, we also know $w^{l-1}\notin\Phi$ since $w^{l-1}$ has been removed before Step $k-1$ of the procedure. Hence, we have $w^{k-1}\in \Psi\cup\Phi$, and $w^k\in \Psi\cup\Phi$ if $k\leq s$. Then by the definition of $f_i(\Psi\cup\Phi)$ we have $k\notin f_i(\Psi\cup\Phi)$. We also have $w^l\notin \Psi\cap\Phi$, and $w^{l-1}\notin \Psi\cap\Phi$ if $l\geq2$. Then by the definition of $f_i(\Psi\cap\Phi)$ we have $l\notin f_i(\Psi\cap\Phi)$. Since $f_i(\Phi)=\emptyset$, we know that $f_i(\Psi\cup\Phi)\subseteq f_i(\Psi)$\footnote{Suppose $j\in f_i(\Psi\cup\Phi)$ and $j\notin f_i(\Psi)$, then $w^{j-1}\notin\Psi\cup\Phi$ if $j\geq2$, and $w^j\in\Psi\cup\Phi$ if $j\leq s$; and thus $j\notin f_i(\Psi)$ implies $w^j\notin\Psi$ if $j\leq s$ (otherwise $w^j\in\Psi$ (if $j\leq s$) and $w^{j-1}\notin\Psi$ (if $j\geq2$) indicates $j\in f_i(\Psi)$). We thus have $w^j\in\Phi$ if $j\leq s$, and $w^{j-1}\notin\Phi$ if $j\geq2$, which indicates $j\in f_i(\Phi)$ and contradicts $f_i(\Phi)=\emptyset$.} and $f_i(\Psi\cap\Phi)\subseteq f_i(\Psi)$.\footnote{Suppose $j\in f_i(\Psi\cap\Phi)$ and $j\notin f_i(\Psi)$, then $w^{j-1}\notin\Psi\cap\Phi$ if $j\geq2$, and $w^j\in\Psi\cap\Phi$ if $j\leq s$; and thus $j\notin f_i(\Psi)$ implies $w^{j-1}\in\Psi$ if $j\geq 2$ (otherwise $w^{j-1}\notin\Psi$ (if $j\geq2$) and $w^j\in\Psi$ (if $j\leq s$) indicates $j\in f_i(\Psi)$). We thus have $w^{j-1}\notin\Phi$ if $j\geq2$, and $w^j\in\Phi$ if $j\leq s$, which indicates $j\in f_i(\Phi)$ and contradicts $f_i(\Phi)=\emptyset$.} Then, since $k\notin f_i(\Psi\cup\Phi)$, $l\notin f_i(\Psi\cap\Phi)$, and $l,k\in f_i(\Psi)$, we have $|f_i(\Psi\cup \Phi)|,|f_i(\Psi\cap \Phi)|\leq|f_i(\Psi)|-1$.
\end{proof}

We assume inductively that $\mathrm{U}_i(\Psi,\mathbf{p}'_i)\leq \mathrm{V}_i(\mathbf{p}_i)$ for any agent $i\in\{i^1,\ldots,i^s\}\setminus\{i^{s+1}\}$ and any $\Psi\in\Omega_i$ if $|f_i(\Psi)|\leq k$ where $k\geq1$. Notice that this assumption holds when $k=1$. Now, consider $\Psi\subseteq\Omega_i$ with $|f_i(\Psi)|=k+1$. By Lemma \ref{lma_phi}, there exists $\Phi\in \mathrm{D}_i(\mathbf{p}_i)$ such that $\mathrm{U}_i(\Phi,\mathbf{p}'_i)=\mathrm{V}_i(\mathbf{p}_i)$ and  $|f_i(\Psi\cup \Phi)|,|f_i(\Psi\cap \Phi)|\leq k$. Gross complementarity indicates $\mathrm{v}_i(\Psi)+\mathrm{v}_i(\Phi)\leq \mathrm{v}_i(\Psi\cup \Phi)+\mathrm{v}_i(\Psi\cap \Phi)$, which further implies $\mathrm{U}_i(\Psi,\mathbf{p}'_i)+\mathrm{U}_i(\Phi,\mathbf{p}'_i)\leq \mathrm{U}_i(\Psi\cup \Phi,\mathbf{p}'_i)+\mathrm{U}_i(\Psi\cap \Phi,\mathbf{p}'_i)$. Since $\mathrm{U}_i(\Phi,\mathbf{p}'_i)=\mathrm{V}_i(\mathbf{p}_i)$, and the inductive assumption indicates $\mathrm{U}_i(\Psi\cup \Phi,\mathbf{p}'_i),\mathrm{U}_i(\Psi\cap \Phi,\mathbf{p}'_i)\leq \mathrm{V}_i(\mathbf{p}_i)$ , we have $\mathrm{U}_i(\Psi,\mathbf{p}'_i)\leq \mathrm{V}_i(\mathbf{p}_i)$. The other part for the last agent $i^{s+1}$ of the complements chain follows the same argument.
\end{proof}

\subsection{Proofs for Section \ref{Sec_verify} }\label{Sec_proofverify}

We present the procedure for verifying equilibrium price vectors in Section \ref{Sec_verify} following the discussions in Section \ref{Sec_strong} and Section \ref{Sec_partial}. Now we provide a rigorous proof for this procedure. The following Lemma \ref{lma_large} and Lemma \ref{lma_non} show that the procedure correctly verifies whether a balanced price vector $\mathbf{p}\in \mathbb{B}$ is an equilibrium price vector and produces a stable outcome when $\mathbf{p}$ is an equilibrium price vector.

\begin{lemma}\label{lma_inter}
\normalfont
Suppose primitive contracts are gross complements for all agents, and the procedure for verifying whether $\mathbf{p}\in\mathbb{B}$ is an equilibrium price vector terminates at Step $s\geq1$. Then, if $(\Phi,\mathbf{p})$ is a competitive equilibrium, we have $\mathrm{G}^k(\mathbf{p})\cap\Phi=\emptyset$ for all $1\leq k\leq s$.
\end{lemma}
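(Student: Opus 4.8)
The plan is to argue by induction on $k$, the engine being the observation that once the competitive-equilibrium set $\Phi$ is known to lie inside $\Omega^k(\mathbf{p})$, it cannot meet $\mathrm{G}^k(\mathbf{p})$.

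First I would unwind the recursion in (\ref{def_proce}): since $\Omega^1(\mathbf{p})=\Omega$ and $\Omega^{l+1}(\mathbf{p})=\Omega^l(\mathbf{p})\setminus \mathrm{G}^l(\mathbf{p})$, we have $\Omega^k(\mathbf{p})=\Omega\setminus\bigcup_{l=1}^{k-1}\mathrm{G}^l(\mathbf{p})$ for every $1\leq k\leq s$. Hence the inductive hypothesis ``$\mathrm{G}^l(\mathbf{p})\cap\Phi=\emptyset$ for all $l<k$'' is exactly the assertion $\Phi\subseteq\Omega^k(\mathbf{p})$, so the induction reduces to a single claim: if $(\Phi,\mathbf{p})$ is a competitive equilibrium and $\Phi\subseteq\Omega^k(\mathbf{p})$, then $\mathrm{G}^k(\mathbf{p})\cap\Phi=\emptyset$.

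To prove this claim I would use that $\Phi_i\in \mathrm{D}_i(\mathbf{p}_i)$ for each $i\in I$ (the definition of competitive equilibrium) together with $\Phi_i\subseteq\Phi\subseteq\Omega^k(\mathbf{p})$: thus $\Phi_i$ is one of agent $i$'s demand sets confined in $\Omega^k(\mathbf{p})$, and therefore $\Phi_i\subseteq\overline{\mathrm{d}}^k_i(\mathbf{p}_i)$ by the definition of $\overline{\mathrm{d}}^k_i(\mathbf{p}_i)$. Consequently, for every $w\in\Phi$ and every $j\in \mathrm{N}(w)$ we get $w\in\Phi_j\subseteq\overline{\mathrm{d}}^k_j(\mathbf{p}_j)$, so $w$ is contained in the largest demand set confined in $\Omega^k(\mathbf{p})$ of every one of its participants. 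But $w\in \mathrm{G}^k(\mathbf{p})$ would require $w\notin\overline{\mathrm{d}}^k_{j'}(\mathbf{p}_{j'})$ for some $j'\in \mathrm{N}(w)$, a contradiction; hence $w\notin \mathrm{G}^k(\mathbf{p})$. The base case $k=1$ is the special case $\Omega^1(\mathbf{p})=\Omega$, which makes $\Phi\subseteq\Omega^1(\mathbf{p})$ automatic, and the inductive step is immediate from the unwinding above.

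I do not expect a serious obstacle. The only point needing care is the bookkeeping that ``confined in $\Omega^k(\mathbf{p})$'' matches exactly the index set of the union defining $\overline{\mathrm{d}}^k_i(\mathbf{p}_i)$, and that $\Phi_i\subseteq\Phi\subseteq\Omega^k(\mathbf{p})$ does place $\Phi_i$ in that union. It is worth noting that gross complementarity is not actually invoked in this argument---it enters the surrounding development only to guarantee that $\overline{\mathrm{d}}^k_i(\mathbf{p}_i)$ is genuinely a largest demand set---and that when the procedure terminates at Step $s$ via $\mathrm{G}^s(\mathbf{p})\cap \mathrm{H}(\mathbf{p})\neq\emptyset$, this lemma is precisely the ingredient that will subsequently force the non-existence of any competitive equilibrium at $\mathbf{p}$.
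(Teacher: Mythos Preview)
Your proposal is correct and follows essentially the same inductive route as the paper's proof: both argue that $\Phi\subseteq\Omega^k(\mathbf{p})$ implies $\Phi_i\subseteq\overline{\mathrm{d}}^k_i(\mathbf{p}_i)$ for every $i$, whence no $w\in\Phi$ can satisfy the defining condition of $\mathrm{G}^k(\mathbf{p})$. Your observation that gross complementarity is not actually invoked is also accurate---the paper uses it only to assert that $\overline{\mathrm{d}}^{k+1}_i(\mathbf{p}_i)$ is nonempty (via the smallest demand set), but your argument makes this redundant since $\Phi_i$ itself already witnesses a demand set confined in $\Omega^{k+1}(\mathbf{p})$.
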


\begin{proof}
If $w\in G^1(\mathbf{p})$ is level-1 partially agreeable, we have $w\notin\overline{\mathrm{d}}^1_i(\mathbf{p}_i)$ for some $i\in N(w)$. Notice that $\overline{\mathrm{d}}^1_i(\mathbf{p}_i)$ is agent $i$'s largest demand set, and $\Phi_i\in \mathrm{D}_i(\mathbf{p}_i)$ since $(\Phi,\mathbf{p})$ is a competitive equilibrium, we thus have $\Phi_i\subseteq\overline{\mathrm{d}}^1_i(\mathbf{p}_i)$. Hence, $\Phi\cap G^1(\mathbf{p})=\emptyset$ and $\Phi\subseteq\Omega^2$. 

We are done if $s=1$. Now suppose $s\geq2$. We assume inductively $\Phi\cap G^l(\mathbf{p})=\emptyset$ for all $1\leq l\leq k$ where $k\in\{1,2,\cdots,s-1\}$. We thus have $\Phi\subseteq\Omega^{k+1}(\mathbf{p})$. If primitive contracts are gross complements for all agents, each agent's smallest demand set at $\mathbf{p}$ belongs to $\Omega^{k+1}(\mathbf{p})$ since $\mathrm{G}^l(\mathbf{p})\cap \mathrm{H}(\mathbf{p})=\emptyset$ for all $l\in\{1,2,\cdots,k\}$. Hence, $\overline{\mathrm{d}}^{k+1}_i(\mathbf{p}_i)$ is well-defined in (\ref{def_proce}) since there is at least one $\Psi^l=\underline{d}_i(\mathbf{p}_i)\in \mathrm{D}_i(\mathbf{p}_i)$ satisfying $\Psi^l\subseteq\Omega^{k+1}(\mathbf{p})$, and thus $\overline{\mathrm{d}}^{k+1}_i(\mathbf{p}_i)$ is agent $i$'s largest demand set confined in $\Omega^{k+1}(\mathbf{p})$. Then, for each agent $i\in I$, $\Phi_i\in \mathrm{D}_i(\mathbf{p}_i)$ and $\Phi\subseteq\Omega^{k+1}(\mathbf{p})$ implies $\Phi_i\subseteq\overline{\mathrm{d}}^{k+1}_i(\mathbf{p}_i)$.  If $w\in G^{k+1}(\mathbf{p})$, we have $w\notin\overline{\mathrm{d}}^{k+1}_i(\mathbf{p}_i)$ for some $i\in N(w)$. we thus know $\Phi\cap G^{k+1}(\mathbf{p})=\emptyset$. This completes the proof since the inductive assumption holds for $k=1$.
\end{proof}

\begin{lemma}\label{lma_large}
\normalfont
If primitive contracts are gross complements for all agents, and the procedure for verifying whether $\mathbf{p}\in\mathbb{B}$ is an equilibrium price vector terminates at Step $s\geq1$ as $\mathrm{G}^s(\mathbf{p})=\emptyset$, then $(\bigcup_{i\in I}\overline{\mathrm{d}}^s_i(\mathbf{p}_i),\mathbf{p})$ is a competitive equilibrium in which $\bigcup_{i\in I}\overline{\mathrm{d}}^s_i(\mathbf{p}_i)$ is the largest efficient set of primitive contracts.
\end{lemma}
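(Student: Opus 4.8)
\noindent\emph{Proof proposal for Lemma \ref{lma_large}.}

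The plan is to first establish that $(\Phi^{*},\mathbf{p})$ is a competitive equilibrium, where $\Phi^{*}\equiv\bigcup_{i\in I}\overline{\mathrm{d}}^{s}_{i}(\mathbf{p}_{i})$, and then to invoke Lemma \ref{lma_inter} to show that $\Phi^{*}$ contains every efficient set of primitive contracts, so that it is the largest one.

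For the first part, observe that since the procedure reaches Step $s$ we have $\mathrm{G}^{l}(\mathbf{p})\cap \mathrm{H}(\mathbf{p})=\emptyset$ for every $l<s$; exactly as in the proof of Lemma \ref{lma_inter}, this guarantees $\underline{\mathrm{d}}_{i}(\mathbf{p}_{i})\subseteq\Omega^{s}(\mathbf{p})$ for each $i\in I$, so $\overline{\mathrm{d}}^{s}_{i}(\mathbf{p}_{i})$ is well-defined and, by gross complementarity (under which $\mathrm{D}_{i}(\mathbf{p}_{i})$ is closed under union), is agent $i$'s largest demand set contained in $\Omega^{s}(\mathbf{p})$; in particular $\overline{\mathrm{d}}^{s}_{i}(\mathbf{p}_{i})\in \mathrm{D}_{i}(\mathbf{p}_{i})$. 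The key step is to verify $\Phi^{*}_{i}=\overline{\mathrm{d}}^{s}_{i}(\mathbf{p}_{i})$ for every $i\in I$. The inclusion $\overline{\mathrm{d}}^{s}_{i}(\mathbf{p}_{i})\subseteq\Phi^{*}_{i}$ is immediate. For the reverse, take $w\in\Phi^{*}_{i}$; then $w\in\Omega_{i}$ and $w\in\overline{\mathrm{d}}^{s}_{j}(\mathbf{p}_{j})$ for some $j\in \mathrm{N}(w)$, whence $w\in\Omega^{s}(\mathbf{p})$. Were $w\notin\overline{\mathrm{d}}^{s}_{i}(\mathbf{p}_{i})$, then $w$ would be demanded by $j$ but not by $i$ among the participants of $w$ in the level-$s$ sense, i.e. $w\in \mathrm{G}^{s}(\mathbf{p})$, contradicting $\mathrm{G}^{s}(\mathbf{p})=\emptyset$. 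Hence $\Phi^{*}_{i}=\overline{\mathrm{d}}^{s}_{i}(\mathbf{p}_{i})\in \mathrm{D}_{i}(\mathbf{p}_{i})$ for every $i$, so $(\Phi^{*},\mathbf{p})$ is a competitive equilibrium, and part (i) of Lemma \ref{lma_effi} gives that $\Phi^{*}$ is efficient.

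For the second part, let $\Phi\subseteq\Omega$ be an arbitrary efficient set. Since $\mathbf{p}$ is now known to be an equilibrium price vector, part (ii) of Lemma \ref{lma_effi} makes $(\Phi,\mathbf{p})$ a competitive equilibrium, so Lemma \ref{lma_inter} yields $\Phi\cap \mathrm{G}^{k}(\mathbf{p})=\emptyset$ for all $1\leq k\leq s$, i.e. $\Phi\subseteq\Omega^{s}(\mathbf{p})$. Then for each $i\in I$, $\Phi_{i}\in \mathrm{D}_{i}(\mathbf{p}_{i})$ is a demand set contained in $\Omega^{s}(\mathbf{p})$, so $\Phi_{i}\subseteq\overline{\mathrm{d}}^{s}_{i}(\mathbf{p}_{i})=\Phi^{*}_{i}$; taking the union over $i$ gives $\Phi\subseteq\Phi^{*}$. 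Thus $\Phi^{*}$ is efficient and contains every efficient set, hence is the largest efficient set of primitive contracts.

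I expect the only genuinely delicate point to be the identity $\Phi^{*}_{i}=\overline{\mathrm{d}}^{s}_{i}(\mathbf{p}_{i})$: one has to be careful that the union defining $\Phi^{*}$ ranges over all agents while each $\overline{\mathrm{d}}^{s}_{i}(\mathbf{p}_{i})$ is assembled only from agent $i$'s own demand sets, and the reconciliation rests precisely on the termination condition $\mathrm{G}^{s}(\mathbf{p})=\emptyset$, which says that no primitive contract surviving into $\Omega^{s}(\mathbf{p})$ is demanded (in the level-$s$ sense) by some but not all of its participants. Once this is in place, the remainder is a routine chaining of Lemmas \ref{lma_effi} and \ref{lma_inter}.
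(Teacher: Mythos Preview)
Your proof is correct and follows essentially the same route as the paper: both arguments first use $\mathrm{G}^{l}(\mathbf{p})\cap \mathrm{H}(\mathbf{p})=\emptyset$ for $l<s$ to see that $\overline{\mathrm{d}}^{s}_{i}(\mathbf{p}_{i})$ is well-defined, then use $\mathrm{G}^{s}(\mathbf{p})=\emptyset$ to obtain $\Phi^{*}_{i}=\overline{\mathrm{d}}^{s}_{i}(\mathbf{p}_{i})$, and finally combine Lemmas \ref{lma_effi} and \ref{lma_inter} for the maximality part. The one genuine difference is in that last step: the paper argues by contradiction, invoking part (i) of Lemma \ref{lma_RY} (the lattice structure of efficient sets) to produce a strictly larger efficient $\Phi\supset\Phi^{*}$ and then deriving $\Phi\cap \mathrm{G}^{k}(\mathbf{p})\neq\emptyset$, whereas you show directly that \emph{every} efficient $\Phi$ satisfies $\Phi_{i}\subseteq\overline{\mathrm{d}}^{s}_{i}(\mathbf{p}_{i})$ once Lemma \ref{lma_inter} forces $\Phi\subseteq\Omega^{s}(\mathbf{p})$. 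Your version is slightly more economical, since it does not need the lattice result at all.
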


\begin{proof}
We have proved this statement for $s=1$ in Section \ref{Sec_partial}. Now suppose $s\geq2$. Since $\mathrm{G}^k(\mathbf{p})\cap \mathrm{H}(\mathbf{p})=\emptyset$ for each $1\leq k\leq s$, we know that each agent $i$'s smallest demand set at $\mathbf{p}$ belongs to $\Omega^k(\mathbf{p})$: $\underline{d}_i(\mathbf{p}_i)\subseteq\Omega^k(\mathbf{p})$ for each $1\leq k\leq s$. Hence $\overline{\mathrm{d}}^k_i(\mathbf{p}_i)$ is well-defined in (\ref{def_proce}) since there is at least one $\Psi^l=\underline{d}_i(\mathbf{p}_i)\in \mathrm{D}_i(\mathbf{p}_i)$ satisfying $\Psi^l\subseteq\Omega^k(\mathbf{p})$ for all $1\leq k\leq s$. Then gross complementarity indicates $\overline{\mathrm{d}}^s_i(\mathbf{p}_i)\in\mathrm{D}_i(\mathbf{p}_i)$ for each agent $i\in I$. Since $\mathrm{G}^s(\mathbf{p})=\emptyset$ indicates $\overline{\mathrm{d}}^s_i(\mathbf{p}_i)=\Omega_i\cap(\bigcup_{i\in I}\overline{\mathrm{d}}^s_i(\mathbf{p}_i))\in\mathrm{D}_i(\mathbf{p}_i)$, we know that $(\bigcup_{i\in I}\overline{\mathrm{d}}^s_i(\mathbf{p}_i),\mathbf{p})$ is a competitive equilibrium.

The part (i) of Lemma \ref{lma_effi} indicates that the set $\bigcup_{i\in I}\overline{\mathrm{d}}^s_i(\mathbf{p}_i)$ is efficient. Suppose $\bigcup_{i\in I}\overline{\mathrm{d}}^s_i(\mathbf{p}_i)$ is not the largest efficient set of primitive contracts, then the part (i) of Lemma \ref{lma_RY} implies that there is an efficient set of primitive contracts $\Phi\supset\bigcup_{i\in I}\overline{\mathrm{d}}^s_i(\mathbf{p}_i)$. The part (ii) of Lemma \ref{lma_effi} indicates that $(\Phi,\mathbf{p})$ is also a competitive equilibrium. We know that $\Phi\subseteq\Omega^s(\mathbf{p})$ does not hold since $\overline{\mathrm{d}}^s_i(\mathbf{p}_i)$ is each agent $i$'s largest demand set confined in $\Omega^s(\mathbf{p})$. Thus, we have $\Phi\cap G^k(\mathbf{p})\neq\emptyset$ for some $1\leq k\leq s-1$. This is impossible according to Lemma \ref{lma_inter}.

Therefore, the set $\bigcup_{i\in I}\overline{\mathrm{d}}^s_i(\mathbf{p}_i)$ is the largest efficient set of primitive contracts, and thus, the part (ii) of Lemma \ref{lma_RY} indicates that the competitive equilibrium $(\bigcup_{i\in I}\overline{\mathrm{d}}^s_i(\mathbf{p}_i),\mathbf{p})$ corresponds to a stable outcome.
\end{proof}

\begin{lemma}\label{lma_non}
\normalfont
If primitive contracts are gross complements for all agents, and the procedure for verifying whether $\mathbf{p}\in\mathbb{B}$ is an equilibrium price vector terminates at Step $s\geq1$ as $\mathrm{G}^s(\mathbf{p})\cap \mathrm{H}(\mathbf{p})\neq\emptyset$, then $\mathbf{p}$ is a non-equilibrium price vector.
\end{lemma}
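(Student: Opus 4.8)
The plan is to argue by contradiction, leaning entirely on Lemma \ref{lma_inter}. Suppose $\mathbf{p}$ were an equilibrium price vector. Then by definition there is a set of primitive contracts $\Phi\subseteq\Omega$ with $(\Phi,\mathbf{p})$ a competitive equilibrium, i.e., $\Phi_i\in\mathrm{D}_i(\mathbf{p}_i)$ for every $i\in I$. Since by hypothesis the procedure for verifying whether $\mathbf{p}$ is an equilibrium price vector terminates at Step $s$, Lemma \ref{lma_inter} applies and gives $\mathrm{G}^k(\mathbf{p})\cap\Phi=\emptyset$ for all $1\le k\le s$; in particular $\mathrm{G}^s(\mathbf{p})\cap\Phi=\emptyset$.

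Next I would exploit the standing assumption $\mathrm{G}^s(\mathbf{p})\cap\mathrm{H}(\mathbf{p})\neq\emptyset$ by picking a primitive contract $w$ in this intersection. Because $w\in\mathrm{H}(\mathbf{p})$, there is an agent $i\in\mathrm{N}(w)$ who strongly demands $w$ at $\mathbf{p}_i$, that is, $w\in\underline{\mathrm{d}}_i(\mathbf{p}_i)$, so $w$ belongs to every demand set of agent $i$ at $\mathbf{p}_i$. Since $\Phi_i\in\mathrm{D}_i(\mathbf{p}_i)$, this forces $w\in\Phi_i\subseteq\Phi$, hence $w\in\mathrm{G}^s(\mathbf{p})\cap\Phi$, contradicting the previous paragraph. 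Therefore $\mathbf{p}$ is a non-equilibrium price vector, which is the claim. (An alternative route would build a complements chain backwards from $w$ using Lemma \ref{lma_complements} and then invoke the argument preceding Proposition \ref{prop_chain}, but that merely re-derives the content of Lemma \ref{lma_inter} in a longer form, so the direct contradiction above is preferable.)

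I do not expect any genuine obstacle in this particular lemma: all the substantive work has been front-loaded into Lemma \ref{lma_inter}, whose proof — already supplied above — is where the real effort lies, since there one must induct on $k$ to show simultaneously that $\Phi\subseteq\Omega^{k+1}(\mathbf{p})$ and that $\overline{\mathrm{d}}^{k+1}_i(\mathbf{p}_i)$ is well defined as agent $i$'s largest demand set confined in $\Omega^{k+1}(\mathbf{p})$, the latter relying on gross complementarity to guarantee that each agent's smallest demand set $\underline{\mathrm{d}}_i(\mathbf{p}_i)$ survives every removal as long as $\mathrm{G}^l(\mathbf{p})\cap\mathrm{H}(\mathbf{p})=\emptyset$ for $l<s$. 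Granting Lemma \ref{lma_inter}, the proof of Lemma \ref{lma_non} is the three-line contradiction sketched here.
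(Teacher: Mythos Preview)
Your proposal is correct and matches the paper's own proof essentially line for line: assume a competitive equilibrium $(\Phi,\mathbf{p})$, observe that $\mathrm{H}(\mathbf{p})\subseteq\Phi$ (your argument via a strongly demanded $w$ is exactly how one shows this), and then derive $\mathrm{G}^s(\mathbf{p})\cap\Phi\neq\emptyset$, contradicting Lemma~\ref{lma_inter}. The paper's proof is the same three-line contradiction you sketch, so there is nothing further to add.
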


\begin{proof}
If $(\Phi,\mathbf{p})$ is a competitive equilibrium, then $\mathrm{H}(\mathbf{p})\subseteq\Phi$. Then $\mathrm{G}^s(\mathbf{p})\cap \mathrm{H}(\mathbf{p})\neq\emptyset$ indicates $\mathrm{G}^s(\mathbf{p})\cap \Phi\neq\emptyset$. This contradicts Lemma \ref{lma_inter}.
\end{proof}

\bigskip

\bigskip

\end{document}